\documentclass[notitlepage,superscriptaddress,nofootinbib,twocolumn,tightenlines]{revtex4-2}

\usepackage[utf8]{inputenc}
\usepackage[english]{babel}
\bibliographystyle{apsrev4-2}
\usepackage{booktabs} 
\usepackage{comment} 
\usepackage{hyperref} 
\usepackage{physics} 
\usepackage[table,xcdraw,dvipsnames]{xcolor} 
\def \sg{\sigma}
\allowdisplaybreaks


\usepackage{amsmath,verbatim,latexsym,amssymb,indentfirst,mathrsfs,mathtools,amsthm,bbm,bm,hyperref,url,cancel,subcaption}
\usepackage[font=small,labelfont=bf,format=plain,justification=raggedright,singlelinecheck=false]{caption}
\usepackage{verbatim,indentfirst}
\usepackage[title,titletoc]{appendix}


\usepackage{graphicx}
\graphicspath{{images/}}
\usepackage{epstopdf}
\newcommand{\caphead}[1]{{\bf #1}}

\renewcommand{\thesection}{\Roman{section}}
\renewcommand{\thesubsection}{\Roman{section} \Alph{subsection}}
\renewcommand{\thesubsubsection}{\Roman{section} \Alph{subsection} \arabic{subsubsection}}
\makeatletter
\def\p@subsection{}
\makeatother
\makeatletter
\def\p@subsubsection{}
\makeatother

\newtheorem{theorem}{Theorem}

\newtheorem{proposition}{Proposition}


\makeatletter
\newcommand\footnoteref[1]{\protected@xdef\@thefnmark{\ref{#1}}\@footnotemark}
\makeatother

\usepackage{soul}

\usepackage{color}
\usepackage[dvipsnames]{xcolor}
\usepackage[normalem]{ulem}

\newcommand{\Group}{\mathcal{G}}
\newcommand{\SUD}{D}   
\newcommand{\su}{\mathfrak{su}}
\newcommand{\Dim}{d}   

\newcommand{\Algebra}{\mathcal{A}}

\newcommand{\Ladder}{L}  
\newcommand{\ad}{\mathrm{ad}}  
\newcommand{\DimAlg}{c}  

  
\newcommand{\Sites}{N}  



\newcommand{\hc}{ {\rm h.c.} }

\newcommand{\tot}{ {\rm tot} }
\def\id{\mathbbm{1}}   

\newcommand{\Hil}{\mathcal{H}}  

\newcommand{\LParen}{ \bm{(} }
\newcommand{\RParen}{ \bm{)} }
\newcommand{\JParen}{ {(j)} }

\newcommand*{\Set}[1]{\left\{  #1  \right\}}


\renewcommand\th{ {\rm th} }



\newcommand{\shayan}[1]{{\color{black}#1}}

\begin{document}

\title{How to build Hamiltonians that transport noncommuting charges \texorpdfstring{\\}{} in quantum thermodynamics}

\author{Nicole Yunger Halpern}
\email{nicoleyh@umd.edu}
\affiliation{ITAMP, Harvard-Smithsonian Center for Astrophysics, Cambridge, MA 02138, USA}
\affiliation{Department of Physics, Harvard University, Cambridge, MA 02138, USA}
\affiliation{Research Laboratory of Electronics, Massachusetts Institute of Technology, Cambridge, MA 02139, USA}
\affiliation{Center for Theoretical Physics, Massachusetts Institute of Technology, Cambridge, Massachusetts 02139, USA}
\affiliation{Joint Center for Quantum Information and Computer Science, NIST and University of Maryland, College Park, MD 20742, USA}
\affiliation{Institute for Physical Science and Technology, University of Maryland, College Park, MD 20742, USA}
\author{Shayan Majidy} 
\email{smajidy@uwaterloo.ca}
\affiliation{Institute for Quantum Computing, University of Waterloo, Waterloo, Ontario N2L 3G1, Canada}
\affiliation{Perimeter Institute for Theoretical Physics, Waterloo, Ontario N2L 2Y5, Canada} 

\date{\today}

\begin{abstract}
Noncommuting conserved quantities have recently launched a subfield of quantum thermodynamics. In conventional thermodynamics, a system of interest and an environment exchange quantities---energy, particles, electric charge, etc.---that are globally conserved and are represented by Hermitian operators. These operators were implicitly assumed to commute with each other, until a few years ago. Freeing the operators to fail to commute has enabled many theoretical discoveries---about reference frames, entropy production, resource-theory models, etc. Little work has bridged these results from abstract theory to experimental reality. This paper provides a methodology for building this bridge systematically: We present a prescription for constructing Hamiltonians that conserve noncommuting quantities globally while transporting the quantities locally. The Hamiltonians can couple arbitrarily many subsystems together and can be integrable or nonintegrable. Our Hamiltonians may be realized physically with superconducting qudits, with ultracold atoms, and with trapped ions.
\end{abstract}

\maketitle


One of thermodynamics' most fundamental and ubiquitous interactions is the exchange of quantities between a system of interest and an environment. 
Example quantities include energy, particles, and electric charge.
As the quantities are conserved globally, we call them `charges.' (We call even the local quantities `charges' for convenience, even though the quantities are not conserved locally.) Such exchanges happen, for example, in electrochemical batteries, in a cooling cup of coffee, and when a few spins flip to align with a magnetic field. 
Given such exchanges' pervasiveness, studying their quantum facets is essential for 
(i) developing the field of quantum thermodynamics~\cite{vinjanampathy2016quantum,Goold2015arXiv_review} and 
(ii) discovering nonclassical features of quantum many-body thermalization in condensed matter; atomic, molecular and optical (AMO) physics; high-energy physics; and chemistry. One important quantum phenomenon is operators' failure to commute with each other: Noncommutation underlies uncertainty relations, measurement disturbance, and more. Therefore, studying exchanges of noncommuting charges is crucial for understanding quantum thermodynamics. As a result, noncommuting charges have been enjoying a heyday~\cite{Lostaglio_17_Thermodynamic, Guryanova_16_Thermodynamics, NYH_18_Beyond, Lostaglio_14_Masters, NYH_16_Microcanonical, vaccaro2011information, Sparaciari_18_First, Khanian_20_From, Khanian_20_Resource, Gour_18_Quantum, Manzano_20_Non, Popescu_18_Quantum, Popescu_19_Reference, Ito_18_Optimal, Bera_19_Thermo, Mur_Petit_18_Revealing, Manzano_18_Squeezed,  NYH_20_Noncommuting, Manzano_20_Hybrid, Fukai_20_Noncommutative, Mur-Petit_19_Fluctuations, Scandi_18_Thermodynamic, Manzano_18_Squeezed, Sparaciari_18_First, Mur_Petit_18_Revealing, Boes_18_Statistical, Ito_18_Optimal, Mitsuhashi_21_Characterizing, Croucher2018, wright2018quantum, croucher2021memory} in quantum-information-theoretic (QIT) thermodynamics.

Lifting the assumption that exchanged charges commute~\cite{Jaynes_57_Information_II, Balian_86_Dissipation, Lostaglio_14_Masters,NYH_18_Beyond,Lostaglio_17_Thermodynamic,Guryanova_16_Thermodynamics,NYH_16_Microcanonical, vaccaro2011information} 
has led to discoveries of truly quantum thermodynamics. Example discoveries include 
a generalization of the microcanonical state~\cite{NYH_18_Beyond}, 
resource theories~\cite{Guryanova_16_Thermodynamics,Lostaglio_17_Thermodynamic,YH_16_Microcanonical,NYH_18_Beyond,Sparaciari_18_First,Khanian_20_Resource,Khanian_20_From}, a generalization of the majorization preorder~\cite{Gour_18_Quantum}, 
a reduction of entropy production by charges' noncommutation~\cite{Manzano_20_Non}, 
and reference-frame designs~\cite{Popescu_18_Quantum,Popescu_19_Reference}. 
These discoveries and others have turned noncommuting thermodynamic charges into
a growing subfield.

Most of the discoveries have, until recently, belonged in QIT thermodynamics. However, given their fundamental and nonclassical nature, exchanges of thermodynamic noncommuting charges call for bridges to experiments and to many-body physics. Building these bridges requires Hamiltonians that transport noncommuting observables locally while conserving them globally: As stated in the quantum-thermodynamics review~\citep{vinjanampathy2016quantum}, `an abstract view of dynamics, minimal in the details of Hamiltonians, is often employed in quantum information' and so in QIT thermodynamics. In contrast, experiments, simulations, and many-body theory require microscopic Hamiltonians.

Before the present work, it was unknown 
(i) whether Hamiltonians that transport noncommuting observables locally, while conserving them globally, exist; 
(ii) how such Hamiltonians look, if they exist; 
(iii) how to construct such Hamiltonians for given noncommuting charges; and
(iv) for which charges such Hamiltonians can be constructed. 
We answer these questions, enabling the system-and-environment exchange of noncommuting charges to progress from its QIT-thermodynamic birthplace to many-body physics and experiments. Example predictions that merit experimental exploration include 
(i) the emergence of the quantum equilibrium state in \citep{Lostaglio_17_Thermodynamic, Guryanova_16_Thermodynamics, NYH_18_Beyond}, 
(ii) the decrease in entropy production by noncommuting charges \citep{Manzano_20_Non}, (iii) applications of the entropy decrease to quantum engines \citep{quan2007quantum}, 
(iv) the conjecture that noncommuting charges hinder thermalization \citep{NYH_16_Microcanonical}, and 
(v) the conjecture's application to quantum memories. 
We open the door to experiments by prescribing 
how to construct the needed Hamiltonians.
Our construction also enables the generalization, to noncommuting charges,
of many-body--thermalization tools in condensed matter, AMO physics, and high-energy. Examples include the eigenstate thermalization hypothesis, out-of-time-ordered correlators, and random unitary circuits (e.g.,~\cite{Deutsch_91_Quantum,Srednicki_94_Chaos,Rigol_08_Thermalization,D'Alessio_16_From,Brown_12_Scrambling,Nahum_18_Operator,Khemani_18_Operator,HunterJones_18_Operator,Swingle_18_Quantitative}).

This paper introduces a prescription for constructing Hamiltonians that overtly move noncommuting charges between subsystems while conserving the charges globally. The charges form a finite-dimensional semisimple complex Lie algebra. The Hamiltonians can couple arbitrarily many subsystems together and can be integrable or nonintegrable. The prescription also produces a convenient basis for the algebra---a basis of charges explicitly transported locally, and conserved globally, by the Hamiltonian. 
The prescription is general, being independent of any physical platforms. Consequently, the Hamiltonians can be realized with diverse physical systems, such as superconducting circuits, ultracold atoms, and trapped ions.

In a special case, the charges form the Lie algebra $\su(\SUD)$, $\Sites$ identical subsystems form the global system, and each subsystem corresponds to the Hilbert space $\mathbb{C}^\SUD$. In this example the Schur-Weyl duality describes the Hamiltonians' forms \cite{goodman_2009_symmetry, das_2014_lie}: 
Let the global system (formed from the system of interest and the environment)
be many copies of the system of interest.
The Hamiltonians are the linear combinations of the permutations of the copies. (Hamiltonians have also been engineered to have SU$(\SUD)$ symmetry
without regard to whether noncommuting charges are transported~\cite{Choi_17_Dynamical,Choi_20_Robust}.) Our results are more general than the Schur-Weyl duality and elucidate the dynamics' physical interpretation. First, our prescription governs a much wider class of algebras: all finite-dimensional, semisimple Lie algebras in which the Killing form induces a metric. Many physically significant algebras satisfy these assumptions---for example, the simple Lie algebras, which include $\mathfrak{su}(D)$. Second, our results are not restricted to systems  whose Hilbert spaces are $\mathbb{C}^{D}$. Finally, the Hamiltonian form specified by the Schur-Weyl duality---a linear combination of permutations---is an abstract construct. How to implement an arbitrary linear combination of permutations is not obvious. In contrast, our Hamiltonians have a clear physical interpretation, manifestly transporting noncommuting charges between subsystems. To our knowledge, no other class of Hamiltonians that transport charges locally and conserve them globally, comparably general to our class, is known.

This paper begins with our setup, detailed in Sec.~\ref{sec_Setup}. Section~\ref{sec_Gentle_Intro} introduces the Hamiltonian-construction prescription pedagogically. We also review mathematical background and illustrate the prescription with an example familiar in quantum information, the Lie algebra $\su(2)$. Section~\ref{sec_Algorithm} synopsizes the prescription, crystallizing the main result, and presents two properties of the prescription. A richer example provides intuition in Sec.~\ref{sec_su3}: Hamiltonians that transport and conserve charges in the Lie algebra $\mathfrak{su}(3)$. Section~\ref{sec_Outlook} concludes with potential realizations of our Hamiltonians in condensed matter, AMO, and high-energy and nuclear physics.

\section{Setup}\label{sec_Setup}

Consider a global closed quantum many-body system, 
as in recent thermalization experiments~\cite{ritter_07_observing, trotzky_12_probing, langen_13_local, kaufman_16_quantum, smith_16_many, neill_16_ergodic, tang_18_thermalization, malvania_18_onset, sanchez_19_emergent, landsman_19_verified, lewis_19_unifying,  joshi_20_quantum}. As in conventional statistical mechanics, the global system is an ensemble of $\Sites$ identical subsystems.
(We use the term `ensemble' in the traditional sense of statistical physics: a collection of many identical copies of a system of interest. Such ensembles are often invoked to determine equilibrium probability distributions~\cite[p.~62]{mandl1971statistical}.) A few of the subsystems form the system of interest; and the rest, an effective environment. Each subsystem corresponds to a Hilbert space $\Hil$ of finite dimensionality $\Dim$.

We will construct global Hamiltonians, $H^\tot$, that conserve extensive charges defined as follows. Let $Q_\alpha$ denote a Hermitian operator defined on $\Hil$. We denote by $Q_\alpha^\JParen$ the observable defined on the $j^\th$ subsystem's $\Hil$. Each global observable 
\begin{align}
   Q_\alpha^\tot  
   :=  \sum_{j = 1}^{\Sites}  Q_\alpha^\JParen 
   \equiv   \sum_{j = 1}^\Sites
   \id^{\otimes (j - 1)}  \otimes
   Q_\alpha^\JParen  \otimes
   \id^{\otimes (\Sites - j) }
\end{align}
will be conserved by design:
\begin{align}
   \label{eq_Conserve}
   [H^\tot,  Q_\alpha^\tot]  =  0 .
\end{align}
Although the local $Q_\alpha^\JParen$ are not conserved, we will sometimes call them, and the $Q_\alpha$, `charges' for convenience. One might know, initially, of only $c'$ charges' existence.

These $c'$ $Q_\alpha$'s generate a complex Lie algebra $\Algebra$,
which we assume to be finite-dimensional.
$\Algebra$ consists of all the charges 
(as well as non-Hermitian operators, which we ignore).
Lie algebras describe many conserved physical quantities: 
particle number, angular momentum, electric charge, color charge, 
weak isospin, and our space-time’s metric~\cite{das_2014_lie,iachello_06_lie,gilmore_12_lie}.
We focus on non-Abelian Lie algebras, 
motivated by quantum thermodynamics that highlights noncommutation:
The commutator exemplifies the Lie bracket,
$[Q_\alpha, Q_\beta]$.

We assume four more properties of the algebra, to facilitate our proofs.
$\Algebra$ is finite-dimensional and semisimple. 
Representing an observable, $\Algebra$ is over the complex numbers.
Also, on $\Algebra$ is defined a Killing form (reviewed below)
that induces a metric. Many physically significant algebras satisfy these assumptions---for
example, the simple Lie algebras
(see the Supplementary Note 1 and~\cite{das_2014_lie,iachello_06_lie,gilmore_12_lie}).

\section{Pedagogical explanation}
\label{sec_Gentle_Intro}

This section describes the prescription for constructing Hamiltonians $H^\tot$
that conserve noncommuting charges globally [Eq.~\eqref{eq_Conserve}]
while transporting them locally:
\begin{align}
   \label{eq_Local_Transport}
   \left[  H^\tot,  Q_\alpha^\JParen  \right]  \neq  0 
\end{align}
for some site $j$.
(In every such commutator throughout this paper, 
one argument implicitly contains tensor factors of $\id$,
so that both arguments operate on the same Hilbert space.) 
We construct two-body interaction terms, 
then combine them into many-body terms.
This explanation provides a pedagogical introduction;
the prescription is synopsized in Sec.~\ref{sec_Algorithm}.
Here, we illustrate each step with an algebra familiar in quantum information, $\su(2)$, which describes spin-$1/2$ angular momentum.

Table~1 lists the simple Lie algebras.
Every Cartesian product of simple Lie algebras yields
a semisimple Lie algebra $\Algebra$.
Such an algebra generates a semisimple Lie group $\Group$.
For example, if $\Algebra$ consists of angular momentum, 
$\Algebra = \su(\SUD)$.
The corresponding $\Group$ consists of rotations:
$\Group = \text{SU}(\SUD)$.

An algebra has two relevant properties, a dimension and a rank
(Table~1).
The dimension, $\DimAlg$, equals
the number of generators in a basis for the algebra.(We chose the notation $\DimAlg$ to evoke
the $c$ introduced in~\cite{NYH_16_Microcanonical}.
There, $c$ was defined as the number of charges.
As explained in the present paper's Sec.~\ref{sec_Setup},
those charges would form a Lie algebra.
Infinitely many charges would therefore exist,
the $c$ in~\cite{NYH_16_Microcanonical} would equal infinity,
and results in~\cite{NYH_16_Microcanonical} would be impractical.
We therefore define $c$ as the Lie algebra's finite dimension.) For example, $\mathfrak{su}$(2) has the Pauli-operator basis
$\Set{  \sigma_x,  \sigma_y,  \sigma_z  }$ 
and so has a dimension $\DimAlg = 3$. 
The rank, $r$, has a significance that we will encounter shortly.

%
%
%
\begin{table}[t]
\begin{tabular}{@{}cccc@{}} 
   \toprule
        Algebra
   &   Dimension ($\DimAlg$)
   &   Rank ($r$)
   &   $\DimAlg / r$  \\ \midrule
         $\mathfrak{so}$(2$\SUD$)
   &   $\SUD (2 \SUD - 1)$
   &   $\SUD$
   &    $2 \SUD - 1$  \\
        $\mathfrak{sl}$($\SUD$ + 1)
   &   $(\SUD + 1)^2 - 1$
   &   $\SUD$   
   &   $\SUD + 2$    \\
         $\mathfrak{so}$($2\SUD + 1$)
   &   $\SUD (2 \SUD + 1)$
   &   $\SUD$
   &    $2 \SUD + 1$    \\
        $\mathfrak{sp}$($2 \SUD$)
   &   $\SUD (2 \SUD + 1)$ 
   &   $\SUD$
   &    $2 \SUD + 1$  \\
    $\mathfrak{g}_2$ & $14$  & $2$ & $7$ \\
    $\mathfrak{f}_4$ & $52$ & $4$ & $13$ \\
    $\mathfrak{e}_6$ & $78$ & $6$ & $13$ \\
    $\mathfrak{e}_7$ & $133$ & $7$ & $19$ \\
    $\mathfrak{e}_8$ & $248$ & $8$ & $31$ \\
   \bottomrule
\end{tabular}
\caption{\caphead{Simple Lie algebras:}
$\DimAlg$ denotes an algebra's dimension, and $r$ denotes the rank. 
We implicitly omit $\mathfrak{so}(2)$ and $\mathfrak{so}(4)$,
which are not simple~\cite{gilmore_12_lie}. Also $\su(\SUD)$ is a simple Lie algebra. However, including $\su(\SUD)$ would be redundant: The complexification of $\su(\SUD)$ is isomorphic to $\mathfrak{sl}(\SUD)$.}
\label{table_D_and_r}
\end{table}

A representation of $\Algebra$ is a Lie-bracket-preserving map
from $\Algebra$ to a set of linear transformations.
The adjoint representation maps from $\Algebra$ to 
linear transformations defined on $\Algebra$.
If $x \in \Algebra$, the adjoint representation $\ad(x)$ 
acts on $y \in \Algebra$ as
$\ad(x) (y)  :=  [x, y]$.
The adjoint representation features in the Killing form,
which we review now.
The definition of $\Algebra$ involves a vector space $V$
defined over a field $F$.
A map $V \times V \to F$ is a \emph{form}.
The \emph{Killing form} is the symmetric bilinear form
\begin{align}
   (x, y)  :=  \Tr \LParen  \ad(x)  \ad(y)  \RParen .
\end{align}
We say that $x$ and $y$ are \emph{Killing-orthogonal} if
$(x, y) = 0$.
We say that subalgebras $\Algebra_1$ and $\Algebra_2$
are Killing-orthogonal if, for all $x \in \Algebra_1$ and $y \in \Algebra_2$,
$(x, y) = 0$.
We will use the Killing form to construct
the preferred basis of charges for $\Algebra$.

Our construction begins with another basis:
Every finite-dimensional semisimple complex Lie algebra $\Algebra$ 
has a Cartan-Weyl basis.
In fact, $\Algebra$ has infinitely many.
Convention may distinguish one Cartan-Weyl basis.
We use the conventional $\su(2)$ basis for concreteness.
We use this basis, in our example, for concreteness.
In general, one selects an arbitrary Cartan-Weyl basis.
The basis contains generators of two types:
Hermitian operators and ladder operators.

The number of Hermitian operators is the algebra's rank, $r$.
These operators commute with each other.
If $r > 1$, we rescale the operators to endow them with unit Hilbert-Schmidt norms:
\begin{align}
   \label{eq_Q_Norms}
   \Tr (Q_\alpha^\dag Q_\alpha) = 1 .
\end{align}
We include these operators, $Q_{\alpha = 1, 2, \ldots, r}$,
in our preferred basis.
In the $\mathfrak{su}$(2) example, $r = 1$; and 
$Q_1 = \sigma_z$, whose eigenstates $\ket{ \pm z }$
correspond to the eigenvalues $\pm 1$.
The $Q_\alpha$'s generate a subalgebra, a \emph{Cartan subalgebra}.

The Cartan-Weyl basis contains, as well as Hermitian operators, ladder operators.
They form pairs $L_{\pm \beta}$, for
$\beta = 1, 2, \ldots, \frac{\DimAlg - r}{2}$: Since the Cartan-Weyl basis has $c$ elements, and $r$ of them are Hermitian, there are $c-r$ ladder operators. Each $\beta$ corresponds to two ladder operators, one raising $(+\beta)$ and one lowering $(-\beta)$. Hence $\beta$ runs from $1$ to $\frac{\DimAlg - r}{2}$. Each $L_{\pm \beta}$ raises or lowers at least one $Q_\alpha$.
In the $\mathfrak{su}$(2) example, the ladder operators
$\sigma_{\pm z}  =  \frac{1}{2} ( \sigma_x  \pm  i  \sigma_y )$
raise and lower $\sigma_z$:
$L_{\pm z}  \ket{ \mp z }
=  \ket{ \pm z }$.
In other algebras, an $L_{\pm \beta}$ 
can raise and/or lower multiple $Q_\alpha$'s.
Examples include $\su(3)$ (Sec.~\ref{sec_su3}).

From each ladder-operator pair, we construct 
an interaction that couples subsystems $j$ and $j'$.
Let $J_\beta^{(j, j')}$ denote a hopping frequency.
An interaction that transports all the charges between $j$ and $j'$,
while conserving each charge globally, has the form
\begin{align}
   \label{eq_H_2body_1}
   H^{(j, j')}
   \propto  \sum_{\beta = 1}^{ (c - r) / 2 }  J_\beta^{(j, j')}    
   \left( L_{+\beta}^\JParen  L_{-\beta}^{(j')}  
           +  L_{-\beta}^\JParen  L_{+\beta}^{(j')}  \right) .
\end{align}

We assemble the other terms in $H^{(j, j')}$ from other Cartan-Weyl bases,
constructed as follows.
Let $U$ denote a general element of the group $\Group$.
We conjugate, with $U$, each element of our first Cartan-Weyl basis: 
For $\alpha = 1, 2, \ldots, r$ and 
$\beta = 1, 2, \ldots, \frac{c - r}{2}$,
\begin{align}
   \label{eq_New_CW_Basis}
   & Q_\alpha  \mapsto  U^\dag Q_\alpha U
   = Q_{\alpha + r},
   \quad \text{and}  \\
   & L_{\pm \beta}  \mapsto
   U^\dag  L_{\pm \beta} U 
   = L_{\pm \left( \beta + \frac{c - r}{2} \right) }.
\end{align}
We include the new $Q_{\alpha}$'s
(for which $\alpha = r + 1, r + 2, \ldots, 2r$) in
our preferred basis for the algebra.

We constrain $U$ such that each new $Q_{\alpha}$ 
is Killing-orthogonal to
(i) each other new charge $Q_{\beta}$ and 
(ii) each original charge $Q_{\gamma}$:
\begin{align}
   \label{eq_Orth_Condn}
   (Q_{\alpha}, Q_{\beta})  
   =  (Q_{\alpha},  Q_{\gamma})
   =  0
\end{align}
for all $\alpha, \beta  = r + 1, r + 2,  \ldots,  2r$
and all $\gamma = 1, 2, \ldots, r$.
This orthogonality restricts $U$, though not completely.
The new $Q_{\alpha}$'s generate a Cartan subalgebra
Killing-orthogonal to the original Cartan subalgebra.
The new ladder operators contribute to the interaction:
\begin{align}
   \label{eq_H_2body_2}
   H^{(j, j')}  
   \propto  \sum_{\beta = 1}^{c - r}
   J_{\beta}^{(j, j')}
   \left(  L_{+\beta}^\JParen    L_{-\beta}^{(j')}
            +  \hc  \right) .
\end{align}

In the $\mathfrak{su}$(2) example, $U$ can be represented by
$\begin{bmatrix}
      a  &  -b^* \\
      b  &  a^*
\end{bmatrix} \, ,$
wherein $a, b \in \mathbb{C}$ and $| a |^2  +  | b |^2  =  1$.
The prescription restricts $U$ only via
the Killing-orthogonality of $U^\dag \sigma_z U$ to $U$.
We enforce only this restriction in the Supplementary Note 2.
Here, we choose a $U$ for pedagogical simplicitly:
$U = ( \id + i\sigma_y ) / \sqrt{2} $, such that 
$Q_{\alpha + r} = Q_2 
=  \sigma_x$.
The new ladder operators,
$\sigma_{\pm x}
:=  \left(  \frac{ \id + i\sigma_y }{ \sqrt{2} }  \right)
\sigma_{\pm z}
\left( \frac{ \id + i\sigma_y }{ \sqrt{2} }   \right)$,
create and annihilate quanta of 
the $x$-component of the angular momentum.
The interaction becomes
\begin{align}
   H^{(j, j')}  
   \propto  \sum_{\beta = z, x}  
   J_\beta^{(j, j')}
   \left(  \sigma_{+ \beta}^\JParen
   \sigma_{- \beta}^{(j')}  +  \hc  \right) .
\end{align}

We repeat the foregoing steps: Write out the form of a general $U \in \Group$.
Conjugate each element of the original Cartan-Weyl basis with $U$.
Constrain $U$ such that
the new $Q_\alpha$'s are orthogonal to each other
and to the older $Q_\alpha$'s.
Include the new $Q_\alpha$'s in our preferred basis for the algebra.
Form a term, in $H^{(j, j')}$, from the new ladder operators $L_{\pm \beta}$.

Each Cartan-Weyl basis contributes $r$ elements $Q_\alpha$ to
the preferred basis.
The basis contains $\DimAlg$ elements,
so we form $\DimAlg / r$ mutually orthogonal Cartan-Weyl bases.
$\DimAlg / r$ equals an integer for the finite-dimensional semisimple complex Lie algebras, 
according to Proposition~\ref{prop_Integer_Ratio} in Sec.~\ref{sec_Algorithm}.
Table~1 confirms the claim for the simple Lie algebras.
Our algebra's finite dimensionality ensures that our prescription halts.
The two-body interaction is now
\begin{align}
   \label{eq_H_2body_3}
   H^{(j, j')}  
   =  \sum_{\beta = 1}^{ \frac{c - r}{2}  \cdot  \frac{c}{r} }
   J_{\beta}^{(j, j')}
   \left(  L_{+\beta}^\JParen    L_{-\beta}^{(j')}
            +  \hc  \right) .
\end{align}

Why is the preferred basis $\{ Q_\alpha \}$ preferable?
First, the basis endows the Hamiltonian with a simple physical interpretation:
$H^{(j, j')}$ transports all these charges locally while conserving them globally.
Second, the basis is (Killing-)orthogonal.

In the $\mathfrak{su}$(2) example, $\DimAlg / r = 3 / 1 = 3$.
Hence we construct three Cartan-Weyl bases, using two SU(2) elements. 
If the first unitary was  $( \id + i\sigma_y ) / \sqrt{2}$,
the second unitary is
$( \id - i \sigma_x + i  \sigma_y + i \sigma_z ) / 2$, 
to within a global phase. 
Consequently, $Q_3 = \sigma_y$, the preferred basis for $\Algebra$ is
$\{ \sg_z, \sg_x, \sg_y \}$, and
\begin{align}
   \label{eq_SU2_Ham_1}
   H^{(j, j')}  
   =  \sum_{\beta = x, y, z}  
   J_\beta^{(j, j')}
   \left( \sigma_{+ \beta}^\JParen  \sigma_{- \beta}^{(j')}
     +  \hc  \right) .
\end{align}

Next, we constrain the interaction to conserve every global charge:
\begin{align}
   \label{eq_Comm_Condn}
   [ H^{(j, j')},  Q_\alpha^\tot ]  =  0
   \quad \forall \alpha = 1, 2, \ldots, c .
\end{align}
The commutation relations~\eqref{eq_Comm_Condn} constrain 
the hopping frequencies $J_\alpha^{(j, j')}$.
The frequencies must equal each other in the $\su(2)$ example:
$J_\alpha^{(j, j')}  \equiv  J^{(j, j')}$ for all $\alpha$. 
The Hamiltonian simplifies to~\cite{NYH_20_Noncommuting}
\begin{align}
   \label{eq_Heisenberg}
   H^{(j, j')}  =  J^{(j, j')}
   \vec{\sigma}^\JParen  \cdot  \vec{\sigma}^{(j')} .
\end{align} 
This Heisenberg model is known to have SU(2) symmetry 
and so to conserve each global spin component 
$\sigma_\alpha^\tot :=\sum_{j=1}^{N}\sigma_{\alpha}^{(j)}$. But the Hamiltonian is typically written in 
the dot-product form~\eqref{eq_Heisenberg}, as
\begin{align}
   \label{eq_su2_H_Simple}
   H^{(j, j')}  
   \propto  \sum_{\alpha = x, y, z}
   \sigma_\alpha^\JParen  \sigma_\alpha^{(j')} .
\end{align}
or in the $z$-biased form 
$H^{(j, j')}  \propto
2 (\sigma_{+z}^\JParen  \sigma_{-z}^{(j')}
+  \sigma_{-z}^\JParen  \sigma_{+z}^{(j')} )
+  \sigma_z^\JParen  \sigma_z^{(j')}$.
None of these three forms reveals that the Heisenberg model
transports noncommuting charges between subsystems.
Our expression~\eqref{eq_SU2_Ham_1} and our prescription do.
In relativistic field theories, making the action 
manifestly Lorentz-invariant is worthwhile; 
analogously, making the Hamiltonian manifestly 
transport noncommuting charges locally, 
while conserving them globally, is worthwhile.
Furthermore, our prescription constructs Hamiltonians that
overtly transport noncommuting charges locally and conserve the charges globally
not only in this simple $\su(2)$ example,
but also for all finite-dimensional semisimple complex Lie algebras
on which the Killing form induces a metric---including algebras for which
this prescription does not produce the Heisenberg Hamiltonian. Supplementary Note 3 discusses
a generalization of the simple form~\eqref{eq_Heisenberg}.

We have constructed a two-body interaction $H^{(j, j')}$
that couples subsystems $j$ and $j'$.
We construct $k$-body terms $H^{ \left( j,  j',  \ldots, j^{(k)}  \right) }$ 
by multiplying two-body terms~\eqref{eq_H_2body_3} together,
constraining the couplings such that 
$[ H^{ \left( j,  j',  \ldots, j^{(k)}  \right) }, Q_\alpha^\tot] = 0$,
and subtracting off any fewer-body terms that appear in the product.
Section~\ref{sec_Algorithm} details the formalism.
In the $\su(2)$ example, a three-body interaction has the form 
(see Supplementary Note 2)
\begin{align} 
   H^{(j, j', j'')}  
   & \propto  H^{(j, j')}  H^{(j', j'')}  H^{(j'', j)} \\
   \label{eq_3Body_H_su2}
   & \propto J^{(j,j',j'')}   [(\sg_x\sg_y\sg_z + \sg_y\sg_z\sg_x + \sg_z\sg_x\sg_y) 
   \nonumber \\ & \quad 
    - ( \sg_z\sg_y\sg_x + \sg_x\sg_z\sg_y + \sg_y\sg_x\sg_z) ] .
\end{align}
wherein $J^{(j,j',j'')}  \in  \mathbb{R}$.

The Hamiltonian we constructed may be integrable.
For example, the one-dimensional (1D) nearest-neighbor Heisenberg model
is integrable~\cite{baxter_16_exactly}.
Integrable Hamiltonians have featured in studies of noncommuting charges in thermodynamics~\cite{Fukai_20_Noncommutative}.
But one might wish for the system to thermalize as much as possible, 
as is promoted by nonintegrability~\citep{gogolin_16_equilibration,D'Alessio_16_From}.
Geometrically nonlocal couplings, many-body interactions, 
and multidimensional lattices tend to break integrability.
Hence one can add terms $H^{(j, j')}$ and $H^{(j, j', \ldots, j^{(k)} )}$
to the global Hamiltonian $H^\tot$,
and keep growing the lattice's dimensionality,
until $H^\tot$ becomes nonintegrable.
Nonintegrability may be diagnosed with, e.g., energy-gap statistics~\cite{D'Alessio_16_From}.
In the $\su(2)$ example, one can break integrability by 
creating next-nearest-neighbor couplings
or by making the global system two-dimensional~\cite{NYH_20_Noncommuting}.

\section{Prescription for constructing the Hamiltonians}
\label{sec_Algorithm}

Here, we synopsize the prescription elaborated on in Sec.~\ref{sec_Gentle_Intro}.
Then, we present two results pertinent to the prescription.
We construct, as follows, Hamiltonians that transport noncommuting charges locally
and conserve the charges globally:
\begin{enumerate}

   \item 
   Identify an arbitrary Cartan-Weyl basis for the algebra, $\Algebra$. 
   
   \item
   The Cartan-Weyl basis contains $r$ Hermitian operators that commute with each other. 
   Scale each such operator such that it has a unit Hilbert-Schmidt norm 
   [Eq.~\eqref{eq_Q_Norms}].
   Label the results $Q_{\alpha = 1, 2, \ldots, r}$.
   Include them in the preferred basis for the algebra.
   
   \item
   The other Cartan-Weyl-basis elements are ladder operators 
   that form raising-and-lowering pairs:
   $\Ladder_{\pm \beta}$, for
   $\beta = 1, 2, \ldots \DimAlg - r$.
   From each pair, form one term in the two-body interaction, $H^{(j, j')}$
   [Eq.~\eqref{eq_H_2body_1}].
   
   \item \label{step_U} 
   Write out the form of the most general element 
   $U  \in  \Group$ of the Lie group $\Group$ generated by $\Algebra$. 
   Conjugate each charge $Q_\alpha$ and each ladder operator $L_{\pm \beta}$
   with $U$ [Eq.~\eqref{eq_New_CW_Basis}].
   The new charges and new ladder operators, together, 
   form another Cartan-Weyl basis.
   
   \item 
   Constrain $U$ such that every new charge $Q_{\alpha}$ is Killing-orthogonal to
   (i) each other new charge and
   (ii) each charge already in the basis [Eq.~\eqref{eq_Orth_Condn}].
   
   \item
   Include each new $Q_{\alpha}$ in the basis for $\Algebra$.
   
   \item \label{step_Add_To_H2}
   From each new pair $L_{\pm \beta}$ of ladder operators,
   form a term in the two-body interaction $H^{(j, j')}$ 
   [Eq.~\eqref{eq_H_2body_2}].
   
   \item \label{step_Repeat}
   Repeat steps~\ref{step_U}-\ref{step_Add_To_H2} until having identified
   $\DimAlg / r$ Cartan-Weyl bases,
   wherein $\DimAlg$ denotes the algebra's dimension.
   Each Cartan-Weyl basis contributes $r$ elements $Q_\alpha$ to
   the preferred basis for $\Algebra$.
   The basis is complete, containing
   $r \cdot \frac{\DimAlg}{r} = \DimAlg$ elements.
   
   \item 
   Constrain the two-body interaction to conserve each global charge
   [Eq.~\eqref{eq_Comm_Condn}], for all $\alpha = 1, 2, \ldots, \DimAlg$.
   Solve for the frequencies $J_\beta^{(j, j')}$ that satisfy this constraint.
   
   \item
   If a $k$-body interaction is desired, for any $k > 2$:
   Perform the following substeps for $\ell = 3, 4, \ldots, k$:
   Multiply together $\ell$ unconstrained two-body interactions~\eqref{eq_H_2body_3} cyclically:
   \begin{align}
      \label{eq_k_body_H}
      H^{ \left( j,  j',  \ldots, j^{(\ell)}  \right) }
      &= H^{(j, j')}  H^{(j', j'')}  \ldots 
      H^{ \left( j^{(\ell-1)},  j^{(\ell)} \right) }  
      \nonumber \\ & \quad \times
      H^{( j^{(\ell)},  j )} .
   \end{align}
   Constrain the couplings so that 
   $[ H^{ \left( j,  j',  \ldots, j^{(\ell)}  \right) },  Q_\alpha^\tot ]  =  0$
   for all $\alpha$.
   If $H^{ \left( j,  j',  \ldots, j^{(\ell)}  \right) }$ contains fewer-body terms
   that conserve all the $Q_\alpha^\tot$, subtract those terms off.
   
   \item
   Sum the accumulated interactions $H^{ \left(j, j', \ldots, j^{(k)} \right)}$ 
   over the subsystems $j, j', \ldots$ to form $H^\tot$.
   
   \item
   If $H^\tot$ is to be nonintegrable, add longer-range interactions and/or
   large-$k$ $k$-body interactions until breaking integrability,
   as signaled by, e.g., energy-gap statistics.

\end{enumerate}

Having synopsized our prescription, we present two properties of it.
The first property ensures that the prescription runs for
an integer number of iterations (step~\ref{step_Repeat}).

\begin{proposition}
\label{prop_Integer_Ratio}
Consider any finite-dimensional semisimple complex Lie algebra.
The algebra's dimension, $c$, and rank, $r$, form an integer ratio:
$c / r \in \mathbb{Z}_{>0}$.
\end{proposition}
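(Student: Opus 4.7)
The plan is to reduce the statement to a question about root systems and then invoke either the Coxeter number identity or a direct case-by-case check using the classification of simple Lie algebras. Any finite-dimensional semisimple complex Lie algebra $\Algebra$ admits a Cartan subalgebra $\mathfrak{h}$ of dimension $r$, with respect to which $\Algebra$ decomposes as $\Algebra = \mathfrak{h} \oplus \bigoplus_{\alpha \in \Phi} \Algebra_\alpha$, where $\Phi \subset \mathfrak{h}^*$ is the associated root system and each root space $\Algebra_\alpha$ is one-dimensional. Hence $\DimAlg = r + |\Phi|$, so the claim $\DimAlg/r \in \mathbb{Z}_{>0}$ is equivalent to showing that $r$ divides $|\Phi|$.

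I would then use the fact that every finite-dimensional semisimple complex Lie algebra splits as a direct sum of simple ideals, $\Algebra = \Algebra_1 \oplus \cdots \oplus \Algebra_k$, with its root system equal to the disjoint union of the irreducible root systems of the simple summands. This lets me reduce the divisibility question to the simple case (with the prescription then being applied to each simple factor separately, so that the relevant integer counts $\DimAlg_i/r_i$ are the ones governing the iteration). The substantive task is therefore to prove $r_i \mid |\Phi_i|$ for each simple factor $\Algebra_i$.

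For a simple complex Lie algebra with irreducible root system $\Phi$, the most conceptual route is to introduce the Coxeter number $h$, defined as the order of a Coxeter element of the Weyl group, and to invoke the standard identity $|\Phi| = r h$, which reflects the fact that a Coxeter element acts on $\Phi$ in $r$ orbits of size $h$. Given this, $\DimAlg/r = h + 1 \in \mathbb{Z}_{>0}$ follows immediately. A self-contained alternative is to appeal to the Cartan--Killing classification: each simple complex Lie algebra is isomorphic to one of $A_n, B_n, C_n, D_n, G_2, F_4, E_6, E_7, E_8$, and the entries of Table~1 read off $\DimAlg/r$ explicitly as an integer in every case.

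The main obstacle is essentially a bookkeeping choice rather than a deep step: whether to assume the Coxeter number identity $|\Phi| = r h$ (in which case the proof becomes a single line) or to substitute a finite case check via the classification. Either route is routine, and the genuine content of the proposition---that $|\Phi|$ is always an integer multiple of $r$---is encoded in the structure of the Weyl group and its action on the roots, a fact for which the Coxeter number identity is the natural packaging.
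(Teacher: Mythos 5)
Your route is genuinely different from the paper's. The paper passes to a connected Lie group $\Group$ integrating $\Algebra$, takes a maximal torus $\mathbb{T}^{r}$, and argues from the coset space $\Group/\mathbb{T}^{r}$; you stay at the algebra level, write $\DimAlg = r + |\Phi|$ via the root-space decomposition, and invoke the Coxeter-number identity $|\Phi| = rh$ for an irreducible root system, giving $\DimAlg/r = h+1$. For simple algebras your argument is complete and, in my view, stronger than the paper's: it identifies the integer as $h+1$ and explains structurally why the divisibility holds (a Coxeter element partitions $\Phi$ into $r$ orbits of size $h$), whereas quotienting by the torus only yields $\dim(\Group/\mathbb{T}^{r}) = \DimAlg - r$ --- dimensions add under quotienting rather than multiply --- so the paper's concluding step ``$c = rn$'' does not follow from what precedes it. Your fallback of a finite case check against the Cartan--Killing classification is also fine and is essentially what Table~1 records.

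The one genuine gap is the passage from semisimple to simple. Knowing $r_i \mid |\Phi_i|$ for each simple ideal does not give $\bigl(\sum_i r_i\bigr) \mid \bigl(\sum_i |\Phi_i|\bigr)$: for $\Algebra = \mathfrak{sl}(2) \oplus \mathfrak{sl}(3)$ one has $\DimAlg = 3 + 8 = 11$ and $r = 1 + 2 = 3$, so $\DimAlg/r \notin \mathbb{Z}$ and the proposition as literally stated fails for general semisimple algebras. Your parenthetical remark --- that the prescription should be run on each simple factor separately, with the integers $\DimAlg_i/r_i$ governing the iteration --- is the correct repair, but it amounts to restricting the proposition to simple summands rather than deriving the stated claim; you should say so explicitly instead of presenting it as a reduction. (The paper's proof does not address this point either, and its Table~1 lists only the simple algebras, so your observation in fact sharpens the scope of the result.)
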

\noindent
We prove this proposition in the Supplementary Note 4.
The second property characterizes the prescription's output.

\begin{theorem}
\label{thm_Qs_Form_Basis}
The charges $Q_1, Q_2,  \ldots,  Q_c$ produced by the prescription 
form a basis for the algebra $\Algebra$.
\end{theorem}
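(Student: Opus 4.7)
The plan is to reduce the theorem to linear independence. Since Proposition~\ref{prop_Integer_Ratio} guarantees $\DimAlg/r \in \mathbb{Z}_{>0}$, and step~\ref{step_Repeat} of the prescription terminates after $\DimAlg/r$ Cartan-Weyl bases, each contributing $r$ Hermitian generators, the collection $\{Q_1, Q_2, \ldots, Q_\DimAlg\}$ has exactly $\DimAlg = \dim \Algebra$ elements. Thus it suffices to prove linear independence, at which point it must form a basis.

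The natural tool is the Killing form, which by the paper's standing assumption induces a metric on $\Algebra$. I would show that the $Q_\alpha$'s are mutually Killing-orthogonal by splitting into two cases. First, for $Q_\alpha, Q_\beta$ drawn from two \emph{different} Cartan-Weyl bases produced by the prescription, step~5 explicitly enforces $(Q_\alpha, Q_\beta) = 0$ via Eq.~\eqref{eq_Orth_Condn}, by constraining the conjugating element $U \in \Group$. Second, for $Q_\alpha, Q_\beta$ drawn from the \emph{same} Cartan-Weyl basis, I would invoke the standard property that in any Cartan-Weyl basis of a semisimple complex Lie algebra the Cartan subalgebra carries a non-degenerate restriction of the Killing form, so its generators can be (and conventionally are) chosen Killing-orthogonal. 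If one prefers not to rely on convention, one may Gram-Schmidt the $r$ commuting Hermitian generators with respect to the Killing form---this does not spoil commutativity (the subalgebra is abelian), and the resulting orthogonal generators can be used throughout the remainder of the prescription without altering any subsequent step.

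With pairwise orthogonality in hand, I would deduce linear independence by a standard argument: each $Q_\alpha$ is nonzero (its Hilbert-Schmidt norm is either unity by Eq.~\eqref{eq_Q_Norms} or is preserved by the unitary conjugation in Eq.~\eqref{eq_New_CW_Basis}), and since the Killing form induces a metric, no nonzero element of $\Algebra$ is self-null. Hence, supposing $\sum_\alpha \lambda_\alpha Q_\alpha = 0$ and pairing with $Q_\beta$ yields $\lambda_\beta (Q_\beta, Q_\beta) = 0$, forcing every $\lambda_\beta = 0$. Combined with the correct cardinality, this proves the $Q_\alpha$'s form a basis.

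The main obstacle is the within-basis orthogonality step, because the prescription as written only normalizes the $Q_{\alpha=1,\ldots,r}$ and does not explicitly orthogonalize them. Articulating cleanly why this presents no loss of generality---either by pointing to the conventional choice of a Cartan-Weyl basis or by observing that a Killing-orthogonal replacement within each Cartan subalgebra is transparent to steps~\ref{step_U}--\ref{step_Add_To_H2}---is the most delicate part, since one must verify that the substitution preserves the ladder-operator structure and the commutation-based constraints used to fix the hopping frequencies $J_\beta^{(j, j')}$.
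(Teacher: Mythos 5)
Your proposal follows the same overall route as the paper's proof: count the charges (there are $c = r \cdot (c/r)$ of them, by Proposition~\ref{prop_Integer_Ratio} and step~\ref{step_Repeat}), establish mutual Killing-orthogonality, invoke the standing assumption that the Killing form induces a metric to get linear independence, and conclude that $c$ linearly independent elements of a $c$-dimensional algebra are a basis. Where you go beyond the paper is in rigor: the paper's proof asserts flatly that ``the charges are Killing-orthogonal by construction,'' whereas the prescription as stated only enforces orthogonality between charges from \emph{different} Cartan-Weyl bases [Eq.~\eqref{eq_Orth_Condn}]. Step~1 allows an arbitrary Cartan-Weyl basis, and the $r$ commuting Hermitian generators of a Cartan subalgebra are not automatically mutually Killing-orthogonal---for instance, the co-roots $H_\alpha$ associated with simple roots, a standard Cartan-Weyl choice, satisfy $(H_\alpha, H_\beta) \neq 0$ whenever the corresponding nodes are adjacent in the Dynkin diagram. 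You correctly flag this as the delicate point and close it by Gram--Schmidting the commuting Hermitian generators against the Killing form (restricted nondegenerately to the Cartan subalgebra), and you rightly note that this substitution leaves the Cartan subalgebra itself, hence the root-space decomposition and all ladder operators, unchanged, so it is transparent to steps~\ref{step_U}--\ref{step_Add_To_H2} and to the $J_\beta^{(j,j')}$ constraints. In short, yours is the same argument stated carefully enough to repair a small gap the paper leaves implicit; the paper's examples happen to use Cartan bases ($\{\sigma_z\}$, $\{\lambda_3,\lambda_8\}$) that are already Killing-orthogonal, which obscures the issue.
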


\begin{proof}
The charges are Killing-orthogonal by construction:
$(Q_\alpha,  Q_\beta) = 0$ for all $\alpha, \beta$. 
The Killing form induces a metric on $\Algebra$ by assumption.
Therefore, the $Q_\alpha$ are linearly independent
according to this metric.

The prescription produces $c$ charges (step~\ref{step_Repeat}). 
$c$ denotes the algebra's dimension, the number of elements in
each basis for $\Algebra$.
Hence every linearly independent set of $c$ $\Algebra$ elements 
forms a basis for $\Algebra$.
Hence the $Q_\alpha$ form a basis.
\end{proof}

\section{\texorpdfstring{$\su(3)$}{su(3)} example}
\label{sec_su3}

Section~\ref{sec_Gentle_Intro} illustrated the Hamiltonian-construction prescription
with the algebra $\su(2)$. 
The $\su(2)$ example offered simplicity but lacks other algebras' richness:
In other algebras, each Cartan-Weyl basis contains
multiple Hermitian operators and multiple ladder-operator pairs.
We demonstrate how our prescription accommodates this richness,
by constructing a two-body Hamiltonian that transports $\su(3)$ elements locally
while conserving them globally. 
Such Hamiltonians may be engineered for superconducting qutrits,
as sketched in Sec.~\ref{sec_Outlook}.
However, this $\su(3)$ example only illustrates our more general prescription,
which works for all finite-dimensional semisimple complex Lie algebras 
on which the Killing form induces a metric.

Each basis for $\su(3)$ contains $\DimAlg = 8$ elements.
The most famous basis consists of the Gell-mann matrices,
$\lambda_{k = 1, 2, \ldots, 8}$~\cite{gell_10_symmetries}.
The $\lambda_k$ generalize the Pauli matrices in certain ways,
being traceless and Killing-orthogonal.
From the Gell-mann matrices is constructed 
the conventional Cartan-Weyl basis~\cite{Cahn_06_Semi}, 
reviewed in the Supplementary Note 5.
The $r = 2$ Hermitian elements are Gell-mann matrices:
\begin{equation}
   \label{eq_su3_firstcharges}
    Q_1 = \lambda_3,
     \quad \text{and} \quad 
     Q_2 = \lambda_8 .
\end{equation}
$Q_1$ and $Q_2$ belong in the preferred basis of charges for $\su(3)$.
For pedagogical clarity, we will identify all the charges
before addressing the ladder operators.

A general element $U \in$ SU(3) contains eight real parameters.
In the Euler parameterization~\citep{byrd1998differential},
\begin{align}
    \label{eq_euler_u3}
    U & = 
    e^{i\lambda_3 \phi_1/2} 
    e^{i\lambda_2 \phi_2/2}
    e^{i\lambda_3 \phi_3/2}
    e^{i\lambda_5 \phi_4/2}
    \nonumber \\ & \quad \times
    e^{i\lambda_3 \phi_5/2}
    e^{i\lambda_2 \phi_6/2}
    e^{i\lambda_3 \phi_7/2}
    e^{i\lambda_8 \phi_8/2} \, .
\end{align}
The parameters $\phi_1, \phi_3, \phi_5, \phi_7 \in [0,  2\pi)$; 
$\phi_2, \phi_4 ,\phi_6 \in [0,  \pi]$; and 
$\phi_8 \in [0,  2\sqrt{3}\pi)$.
We now constrain $U$, identifying the instances $U_{{\rm i}}$ that map 
the first charges to 
$Q_3 = U_{{\rm i}}^\dag Q_1 U_{{\rm i}}$ and $Q_4 = U_{{\rm ii}}^\dag Q_2 U_{\rm ii}$
that are Killing-orthogonal to each other and to the original charges. Supplementary Note 5 contains the details.
We label with a superscript $({\rm i})$ the parameters used to fix $U_{\rm i}$:
$\phi_1^{({\rm i})}$, $\phi_3^{({\rm i})}$, $\phi_7^{({\rm i})}$, $\phi_8^{({\rm i})}$, and $n^{({\rm i})}$.
For convenience, we package several parameters together:
$a^{({\rm i})} := \frac{1}{2} \left(\phi_3^{({\rm i})}
    - \phi_7^{({\rm i})}-\sqrt{3} \phi_8^{({\rm i})} + \pi n^{({\rm i})} + \tfrac{\pi}{2}\right)$,
and $b^{({\rm i})} := a^{({\rm i})} + \phi_7^{({\rm i})}$.
In terms of these parameters, the new charges have the forms (see Supplementary Note 5)
\begin{align}
   \label{eq_su3_gen_charge3}
    Q_3 & = \frac{1}{\sqrt{3}}
    \Big[ (-1)^{ n^{({\rm i})} + 1} \sin(a^{({\rm i})}  - b^{({\rm i})}) \lambda_1  
             \\ \nonumber & \quad
             -  (-1)^{n^{({\rm i})}}\cos(a^{({\rm i})} 
             - b^{({\rm i})}) \lambda_2
             -\sin(a^{({\rm i})}) \lambda_4  
             \\ \nonumber & \quad
             - \cos(a^{({\rm i})}) \lambda_5 + \sin(b^{({\rm i})})\lambda_6 
             + \cos(b^{({\rm i})})\lambda_7  \Big] 
    \; \; \text{and}
    \\ \label{eq_su3_gen_charge4}
    Q_4 & = \frac{(-1)^{n^{({\rm i})}}}{\sqrt{3}} 
    \Big[  (-1)^{n^{({\rm i})} + 1}  \cos(a^{({\rm i})} - b^{({\rm i})}) \lambda_1  
             \\ \nonumber & \quad
            + (-1)^{n^{({\rm i})}}\sin(a^{({\rm i})} - b^{({\rm i})})\lambda_2 
            + \cos(a^{({\rm i})})\lambda_4 
            \\ \nonumber & \quad
            - \sin(a^{({\rm i})})\lambda_5 
            + \cos(b^{({\rm i})})\lambda_6 
            - \sin(b^{({\rm i})})\lambda_7\Big] .
\end{align}
$Q_3$ has the same form as $Q_5$ and $Q_7$,
which satisfy the same Killing-orthogonality conditions.
Similarly, $Q_4$ has the same form as $Q_6$ and $Q_8$.
The later charges' parameters $a^{(\ell)}$ and $b^{(\ell)}$ 
are more restricted, however (see Supplementary Note 5).
We have identified our preferred basis of charges.

Let us construct the ladder operators and Hamiltonian.
Each Cartan-Weyl basis contains $\DimAlg - r = 8 - 2 = 6$ ladder operators.
The conventional Cartan-Weyl basis contains ladder operators
formed from Gell-man matrices:
\begin{align}
   \label{eq_su3_Ladders_13}
   & L_{\pm1} := \tfrac{1}{2}(\lambda_1 \pm i \lambda_2),
   \quad
   L_{\pm 2} := \tfrac{1}{2}(\lambda_4 \pm i\lambda_5), 
   \nonumber \\ &  \; \text{and} \; \;
   L_{\pm 3}  := \tfrac{1}{2}(\lambda_6 \pm i\lambda_7).
\end{align}
Transforming these operators with unitaries $U_{\rm ii, iii, iv}$ yields 
$L_{\pm 4}$ through $L_{\pm 12}$,
whose forms appear in the Supplementary Note 5.
From each ladder operator, we form one term in 
the two-body Hamiltonian~\eqref{eq_H_2body_1}.

Finally, we determine the hopping frequencies $J_\alpha^{(j, j')}$,
demanding that $[H^{(j, j')}, Q_\alpha^\tot] = 0$ for all $\alpha$.
For all possible values of the $a^{(\ell)}$, $b^{(\ell)}$, and $n^{(\ell)}$,
if all the frequencies are nonzero,
then all the frequencies equal each other.
We set $J_\alpha^{(j, j')}  \equiv  \frac{4}{3} \, J^{(j, j')}$, such that
\begin{align}
   \label{eq_su3_H}
   H^{(j, j')}
   =  J^{(j, j')}  \sum_{\alpha = 1}^8  
   \lambda_\alpha^\JParen
   \lambda_\alpha^{(j')}
   \propto   \sum_{\alpha = 1}^8
   Q_\alpha^\JParen
   Q_\alpha^{(j')} .
\end{align}
The Hamiltonian collapses to 
a simple form analogous to the $\su(2)$ example's Eq.~\eqref{eq_su2_H_Simple}
(see Supplementary Note 3).

\section{Outlook}
\label{sec_Outlook}

We have presented a prescription for constructing Hamiltonians that transport noncommuting charges locally while conserving the charges globally. The Hamiltonians can couple arbitrarily many subsystems together and can be integrable or nonintegrable. The prescription produces, as well as Hamiltonians, preferred bases of charges that are (i) overtly transported locally and conserved globally and (ii) Killing-form-orthogonal. This construction works whenever the charges form a finite-dimensional semisimple complex Lie algebra on which the Killing form induces a metric. Whether there exists any Hamiltonians that transport charges locally, while conserving the charges globally, outside of those found by our prescription, is an interesting open question for theoretical exploration.

This work provides a systematic means of bridging noncommuting thermodynamic charges from abstract quantum information theory to condensed matter, AMO physics, and high-energy and nuclear physics. The mathematical results that have accrued~\cite{Lostaglio_14_Masters, Guryanova_16_Thermodynamics, Lostaglio_17_Thermodynamic, NYH_18_Beyond, NYH_16_Microcanonical,Ito_18_Optimal, Bera_19_Thermo, Mur_Petit_18_Revealing, Gour_18_Quantum, Popescu_18_Quantum, Manzano_18_Squeezed,  NYH_20_Noncommuting,  Manzano_20_Non, Sparaciari_18_First, Khanian_20_From,  Khanian_20_Resource, Manzano_20_Hybrid, Fukai_20_Noncommutative, Mur-Petit_19_Fluctuations, Scandi_18_Thermodynamic, Popescu_19_Reference, Manzano_18_Squeezed, Sparaciari_18_First, Mur_Petit_18_Revealing, Popescu_18_Quantum, Boes_18_Statistical, Ito_18_Optimal, Gour_18_Quantum, Mitsuhashi_21_Characterizing} 
can now be tested experimentally, via our construction. 
This paper's introduction highlights example results that merit testing.
Such experiments' benefits include the simulation of quantum systems larger than what classical computers can simulate, the uncovering of behaviors not predicted by theory, and the grounding of abstract QIT thermodynamics in physical reality. 

In addition to harnessing controlled platforms to study
noncommuting charges' quantum thermodynamics,
one may leverage that quantum thermodynamics 
to illuminate high-energy and nuclear physics. Such physics includes non-Abelian gauge theories, such as quantum chromodynamics. How to define and measure such theories' thermalization
is unclear~\cite{Mueller_21_Thermalization}.
One might gain insights by using our dynamics as a bridge 
from quantum thermodynamics to non-Abelian field theories.

As mentioned above, the Heisenberg model~\eqref{eq_SU2_Ham_1}
can be implemented with ultracold atoms and trapped ions~\cite{Jane_03_Simulation, Barredo_16_Atom, de_19_observation, Zhang_17_Observation,Fukuhara_13_Microscopic, Viola_99_Universal}.
Reference~\cite{NYH_20_Noncommuting} details how to harness these setups 
to study noncommuting thermodynamic charges.
We introduce a more complex example here:
We illustrate, with superconducting qubits, how today's experimental platforms can implement the $\su(3)$ instance of our general prescription. 

Superconducting circuits can serve as qudits with Hilbert-space dimensionalities $\Dim \geq 2$~\cite{You_11_Atomic}. Qutrits have been realized with transmons, slightly anharmonic oscillators~\cite{koch_07_charge}. The lowest two energy levels often serve as a qubit, but the second energy gap nearly equals the first. Hence the third level can be addressed relatively easily~\cite{bianchetti_10_control}. Superconducting qutrits offer a tabletop platform for transporting and conserving $\su(3)$ charges as in Sec.~\ref{sec_su3}.

Experiments with $\leq 5$ qutrits have been run~\cite{Morvan_20_Qutrit, blok_20_quantum}, Furthermore, many of the tools used to control and measure superconducting qubits can be applied to qutrits~\cite{bianchetti_10_control, xu_16_coherent, kumar_16_stimulated,tan_18_topological,vepsalainen_19_superadiabatic,lu_17_nonleaky,vepsalainen_16_quantum,yang_12_generation,shlyakhov_18_quantum,danilin_18_experimental,shnyrkov_12_quantum}.
A noncommuting-charges-in-thermodynamics experiment may begin with preparing the qutrits in an approximate microcanonical subspace, a generalization of the microcanonical subspace that accommodates noncommuting charges~\cite{NYH_16_Microcanonical}. Such a state preparation may be achieved with weak measurements~\cite{NYH_20_Noncommuting}, which have been performed on superconducting qudits through cavity quantum electrodynamics~\cite{Naghiloo_19_Introduction}.

$T_2^{*}$ relaxation times of $\sim 39 \; \mu$s, for the lowest energy gap, and $\sim 14 \; \mu$s, for the second-lowest gap, have been achieved~\cite{blok_20_quantum}. Meanwhile, two-qutrit gates can be realized in $\sim 10 - 10^2$ ns~\cite{blok_20_quantum,huang_20_superconducting, kjaergaard_20_superconducting}.
Some constant number of such gates may implement one three-level gate that simulates a term in our Hamiltonian. If the number is order-10, information should be able to traverse an 8-qutrit system $\sim 10$ times before the qutrits decohere detrimentally. 
According to numerics in \citep{NYH_16_Microcanonical}, a small subsystem nears thermalization once information has had time to traverse the global system a number of times linear in $\Sites$. Therefore, realizations of our Hamiltonians are expected to thermalize the system internally. The states of small subsystems, such as qutrit pairs, can be read out via quantum state tomography~\cite{bianchetti_10_control, kumar_16_stimulated, xu_16_coherent, tan_18_topological, vepsalainen_19_superadiabatic}. Hence superconducting qutrits, and other platforms, can import noncommuting charges from quantum thermodynamics to many-body physics, by simulating the Hamiltonians constructed here.

\begin{acknowledgments}
NYH is grateful to Michael Beverland, Aram Harrow, Iman Marvian, Mark Mueller, and Martin Savage for thought-provoking conversations.
SSM would like to thank Jos\'e Polo G\'omez; Jimmy Shih-Chun Hung; Eduardo Mart\'in-Mart\'inez; Erickson Tjoa; and, in particular, Tibra Ali for fruitful discussions.
This work was supported by an NSF grant for the Institute for Theoretical Atomic, Molecular, and Optical Physics at Harvard University and the Smithsonian Astrophysical Observatory, as well as by administrative support from the MIT CTP. This work received support from the National Science Foundation (QLCI grant OMA-2120757).
\end{acknowledgments}

\section*{Author Contributions} 

NYH developed the prescription, managed the project, and led the paper writing. SM worked out the su(3) example, proofs, Supplementary Notes, and superconducting-qutrit details, in addition to leading the referee revisions.

\begin{appendices}

\onecolumngrid

\renewcommand{\thesection}{\Alph{section}}
\renewcommand{\thesubsection}{\Alph{section} \arabic{subsection}}
\renewcommand{\thesubsubsection}{\Alph{section} \arabic{subsection} \roman{subsubsection}}

\makeatletter\@addtoreset{equation}{section}
\def\theequation{\thesection\arabic{equation}}

\section{The Killing form induces a metric on every simple Lie algebra.}
\label{app_Killing_Metric}

Here, we prove a claim made in Sec.~\shayan{II.A of the main text}:
The Killing form induces a metric on every simple Lie algebra.
The proof relies on background material reviewed in Sec.~\shayan{II.B of the main text}.

Every inner product defines a metric.
Therefore, proving that the Killing form induces an inner product suffices.
On a simple Lie algebra, all symmetric bilinear forms equal each other
to within a multiplicative constant~\cite{humphreys_12_introduction}.
The Killing form is one symmetric bilinear form;
another is $\Tr ( Q_\alpha  Q_\beta )$.
Hence $(Q_\alpha,  Q_\beta)
\propto  \Tr ( Q_\alpha  Q_\beta ) 
=  \Tr ( Q_\alpha^\dag  Q_\beta )$.
The final equality follows from the charges' Hermiticity.
The final expression is the Hilbert-Schmidt inner product.
Hence the Killing form induces an inner product.

\section{General Hamiltonian that transports 
\texorpdfstring{$\mathfrak{su}(2)$}{su(2)} elements locally while conserving them globally}
\label{app_su2full}

Section~\shayan{II.B of the main text} illustrated how to construct Hamiltonians
that transport $\su(2)$ elements locally while conserving them globally.
The illustration was not maximally general;
we restricted a unitary $U$ more than required, for pedagogy.
We generalize the construction here.
For clarity of presentation, we derive the charges' forms first 
(\shayan{Supplementary Note} ~\ref{app_Basis_su2})
and the ladder operators' forms second (\shayan{Supplementary Note } ~\ref{app_Ladder_su2}).
We then construct the two-body Hamiltonian $H^{(j, j')}$
and a three-body Hamiltonian (\shayan{Supplementary Note} ~\ref{app_23Body_H_su2}).

\subsection{Preferred basis of charges for \texorpdfstring{$\su(2)$}{su(2)}
\label{app_Basis_su2}}

The conventional Cartan-Weyl basis contains the Hermitian operator
\begin{align}
   \label{eq_Q1_su2_app}
   Q_1 = \sigma_z .
\end{align}
To identify the next Cartan-Weyl basis, we invoke a general unitary
$U \in$ SU(2). In the Euler parameterization,
\begin{align}
   \label{eq:su2_gen_U}
    U &= e^{i\sg_z \phi_1/2}e^{i\sg_y \phi_2/2}e^{i\sg_z \phi_3/2} ,
\end{align}
wherein $\phi_1 \in [0,2\pi)$, $\phi_2 \in [0,\pi]$, and $\phi_3 \in [0,2\pi)$. 
We restrict this general unitary to a $U_{{\rm i}}$
that maps $Q_2$ to a Killing-orthogonal charge 
$Q_2 = U_{{\rm i}}^{\dagger}Q_1U_{{\rm i}}$.
For $X, Y \in \su(\SUD)$, the Killing form evaluates to
$(X, Y) = \Tr ( X Y )$~\cite{humphreys_12_introduction}.
Hence the Killing form between the charges is
\begin{align}
    0 = \left(  U_{{\rm i}}^{\dagger}Q_1U_{{\rm i}}, \, Q_1  \right) 
    = \Tr \left(  U_{{\rm i}}^{\dagger}Q_1U_{{\rm i}}Q_1  \right)
    = 2\cos \left( \phi_2^{({\rm i})}  \right) 
    \label{eq:su2KO1} .
\end{align}
The superscript $({\rm i})$, here and below, labels a parameter as belonging to $U_{{\rm i}}$.
The equation, with $\phi_2^{({\rm i})} \in [0,\pi]$, implies that $\phi_2^{({\rm i})} = \pi/2$. 
The unitary and charge assume the forms
\begin{align}
     U_{{\rm i}} = e^{i\sg_z \phi_1^{({\rm i})}/2}e^{i\sg_y \pi/4}e^{i\sg_z \phi_3^{({\rm i})}/2} 
     \quad \text{and} \quad
    Q_2 = \cos(\phi_3^{({\rm i})}) \sg_x + \sin(\phi_3^{({\rm i})}) \sg_y .
    \label{eq:su2_Q2}
\end{align}

Having identified the second charge, we identify the final one.
We transform $Q_1$ with a unitary $U_{{\rm ii}} \in$ SU(2) such that 
$Q_3 = U_{{\rm ii}}^\dag  Q_1  U_{{\rm ii}}$ is Killing-orthogonal to the first two charges.
The first orthogonality constraint has the form of Eq.~\eqref{eq:su2KO1},
except that a $({\rm ii})$ replaces the superscript $({\rm i})$.
The second orthogonality constraint is
\begin{align}
    0 =  \Tr \left( U_{{\rm ii}}^\dag  Q_1  U_{{\rm ii}}  ,  \,  Q_2  \right)
    = \Tr  \left(  U_{{\rm ii}}^{\dagger}Q_1U_{{\rm ii}}  Q_2  \right) 
    = 2\cos \left( \phi_3^{({\rm i})} - \phi_3^{({\rm ii})} \right) .
\end{align}
Hence $\phi_3^{({\rm ii})} = \phi_3^{({\rm i})} + \pi \left( n^{({\rm ii})} - \frac{1}{2}  \right)$,
wherein $n^{({\rm ii})} \in \mathbb{Z}$. 
Hence $U_{{\rm ii}}$ and  $Q_3$ have the forms
\begin{align}
    U_{{\rm ii}} &=
    e^{i\sg_z \phi_1^{({\rm ii})}/2}e^{i\sg_y \pi/4}e^{i\sg_z [\phi_3^{({\rm i})}+\pi(n-\frac{1}{2})]/2} 
    \label{eq:su2_U3}
    \quad \text{and} \\
    Q_3 &= 
    (-1)^{n^{({\rm ii})}} \left[  \sin(\phi_3^{({\rm i})}) \sg_x -\cos(\phi_3^{({\rm i})}) \sg_y  \right] 
    \label{eq:su2_Q3} .
\end{align}
Equations~\eqref{eq:su2_Q3},~\eqref{eq:su2_Q2}, and~\eqref{eq_Q1_su2_app}
specify the preferred basis of charges for $\su(2)$.

\subsection{General ladder operators for \texorpdfstring{$\su(2)$}{su(2)}
\label{app_Ladder_su2}}

The conventional Cartan-Weyl basis contains operators that
raise and lower $\sigma_z$:
 \begin{align}
  \label{eq_su2_Ladder1_app}
  L_{\pm 1} 
  = \sg_{\pm z} 
  = \frac{1}{2} (\sg_x \pm i \sg_y).
\end{align}
Conjugation with $U_{{\rm i}}$ yields the ladder operators for $Q_2$,
and conjugation with $U_{{\rm ii}}$ yields the ladder operators for $Q_3$:
\begin{align}
   \label{eq_su2_Ladder23_app}
    L_{\pm 2} 
    &=  U_{{\rm i}}^\dag  L_{\pm 1}  U_{{\rm i}}
    = \frac{- e^{\mp i \phi_1^{({\rm i})}}}{2} 
    [\sg_z \pm i( \sin\{\phi_3^{({\rm i})}\}\sg_x - \cos\{\phi_3^{({\rm i})}\}\sg_y)] ,
    \quad \text{and} \\
    L_{\pm 3}
    &=  U_{{\rm ii}}^\dag  L_{\pm 1}  U_{{\rm ii}}
    = \frac{- e^{\mp i \phi_1^{({\rm ii})}}}{2}   \,
    \left\{  \sg_z \mp i (-1)^{n^{({\rm ii})}}  
    \left[  \cos  \left(  \phi_3^{({\rm i})}  \right)  \sg_x  
            + \sin  \left(  \phi_3^{({\rm i})}  \right)  \sg_y  
    \right] \right\} .
\end{align}

\subsection{Two-body and three-body Hamiltonians for \texorpdfstring{$\su(2)$}{su(2)}
\label{app_23Body_H_su2}}

To form $H^{(j, j')}$, we substitute for the ladder operators 
from Eqs.~\eqref{eq_su2_Ladder1_app}
and~\eqref{eq_su2_Ladder23_app} into Eq.~\shayan{(12)}.
We require that $H^{(j, j')}$ conserve each global charge,
imposing Eq.~\shayan{(14)}.
This equation holds, algebra reveals, if and only if
the hopping frequencies $J_\alpha^{(j, j')}$ equal each other.
The Hamiltonian simplifies to Eq.~\shayan{(15)}.
The final expression does not depend on our choice of 
$\phi_k^{({\rm i})}$, $\phi_k^{({\rm ii})}$, or $n^{({\rm i})}$. 

Let us construct a Hamiltonian $H^{(j, j', j'')}$ that 
transfers $\su(2)$ charges between three sites---$j$, $j'$, and $j''$---while 
conserving the charges globally.
We multiply three two-body Hamiltonians together cyclically:
\begin{align}
   \label{eq_3body_H_app}
   H^{(j, j', j'')}
   & \propto H^{(j, j')}  H^{(j', j'')}  H^{(j'', j)}
\end{align}
We substitute in from Eq.~\shayan{(13)},
the $H^{(j, j')}$ expression in which the hopping frequencies have not yet been restricted. 
The frequencies can assume different values,
when $[H^{(j, j', j'')},  \, Q_\alpha^\tot] = 0$, than when 
$[H^{(j, j')}, \, Q_\alpha^\tot] = 0$.
Imposing the first commutator equation yields four sets of 
solutions for the $J_\alpha$'s, when $J_\alpha \neq 0$ for all $\alpha$:
\begin{enumerate}

    \item \label{item_J_alpha_1}
    $J_1 = J_2 = J_3$, $J_4 = J_5 = J_6$, and $J_7 = J_8 = J_9$.
    
    \item
    $J_1 = J_2 = -J_3$, $J_4 = J_5 = -J_6$, and $J_7 = J_8 = -J_9$.

    \item 
    $J_1 = J_2 = \tfrac{-J_3}{2}$, 
    $J_4 = J_5 = \tfrac{-J_6}{2}$, and 
    $J_7 = J_8 =  \tfrac{-J_9}{2}$.
    
    \item 
    $\tfrac{J_2}{J_1} = \tfrac{J_5}{J_4} 
      = \tfrac{J_8}{J_7}$, $J_1 + J_2 
      = - J_3$, $J_4 + J_5 = - J_6$, and 
    $J_7 + J_8 = - J_9$.
\end{enumerate}
We have omitted superscripts for conciseness.
The four solutions lead to distinct Hamiltonians.\footnote{
However, each solution contains a little redundancy:
Consider picking one of the four solutions,
then cycling the indices in $(1,2,3)$ identically to 
the indices in $(4,5,6)$ and to the indices in $(7,8,9)$.
The resulting $J_\alpha$'s specify a Hamiltonian
identical to the original.}

For concreteness, we detail the first set of solutions, item~\ref{item_J_alpha_1}.
We collect three of the frequencies to simplify notation:
$J^{j,j',j''} = J_1^{(j,j')}J_4^{(j',j'')}J_7^{(j',j'')}$.
Substituting the $J_\alpha$'s into the Hamiltonian~\eqref{eq_3body_H_app} yields
\begin{align}
    H^{(j, j', j'')}
    & \propto J^{(j,j',j'')}  \Big\{  3\id\id\id 
   - 2 \left(  H^{(j, j')}  -  H^{(j'', j)}  +  H^{(j', j'')}  \right) 
   \nonumber \\ & \qquad \qquad \quad \;
   + i [(\sg_x\sg_y\sg_z + \sg_y\sg_z\sg_x + \sg_z\sg_x\sg_y) 
    - ( \sg_z\sg_y\sg_x + \sg_x\sg_z\sg_y + \sg_y\sg_x\sg_z) \} .
\end{align}
We have omitted some superscripts to simplify notation.
The first term is trivial, terms 2-4 are two-body,
and each of terms 1-4 conserves each $Q_\alpha^\tot$.
Subtracting these terms off yields the solely three-body Hamiltonian, \shayan{Eq.~(18)}.
We have absorbed the $i$ into the coefficient such that 
$J^{j,j',j''} \in \mathbb{R}$.

\section{Simple form to which a two-body Hamiltonian may collapse}
\label{app_proof_thm_Struc_Consts}

In the $\su(2)$ example, $H^{(j, j')}$ collapsed to 
the simple form~\shayan{(16)}.
The $\su(3)$ $H^{(j, j')}$ collapses to an analogous form,
we shown in Sec.~\shayan{II.D}.
This form generalizes to
\begin{align}
   \label{eq_H_Simple_Form}
   \sum_{\alpha = 1}^c
   Q_\alpha^\JParen  Q_\alpha^{(j')} .
\end{align}
This expression generally conserves noncommuting charges globally,
and transport the charges locally, as proved below.
However, the expression's equality with a two-body Hamiltonian that
clearly, overtly transports local charges from site to site
is proved only in the $\su(2)$ and $\su(3)$ examples.

\begin{proposition}
\label{prop_H_Simple_Form}
Consider any Lie algebra whose structure constants have 
the antisymmetry property
\begin{align}
   \label{eq_Antisym_Struc_Consts}
   f_{\alpha \beta}^\gamma 
   = - f_{\gamma \beta}^\alpha .
\end{align}
A two-body Hamiltonian of the form~\eqref{eq_H_Simple_Form}
conserves the algebra's elements globally.
\end{proposition}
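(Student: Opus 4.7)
The plan is to expand the commutator $[H^{(j,j')},Q_\gamma^{\tot}]$ directly and reduce it to a single index sum that vanishes term-by-term by the hypothesized antisymmetry. First I would note that since each $Q_\alpha^{(k)}$ acts as the identity on sites $k\neq j,j'$, only the $k=j$ and $k=j'$ summands of $Q_\gamma^{\tot}=\sum_k Q_\gamma^{(k)}$ contribute:
\begin{align}
[H^{(j,j')},Q_\gamma^{\tot}] = \sum_\alpha \bigl[Q_\alpha^{(j)}Q_\alpha^{(j')},\,Q_\gamma^{(j)}+Q_\gamma^{(j')}\bigr].
\end{align}
Because factors on different sites commute, the Leibniz rule gives
\begin{align}
[H^{(j,j')},Q_\gamma^{\tot}] = \sum_\alpha \Bigl\{[Q_\alpha^{(j)},Q_\gamma^{(j)}]\,Q_\alpha^{(j')} + Q_\alpha^{(j)}\,[Q_\alpha^{(j')},Q_\gamma^{(j')}]\Bigr\}.
\end{align}

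Next I would introduce the structure constants via $[Q_\alpha,Q_\gamma]=\sum_\beta f_{\alpha\gamma}^{\beta}\,Q_\beta$, applied locally on each site. Substituting yields
\begin{align}
[H^{(j,j')},Q_\gamma^{\tot}] = \sum_{\alpha,\beta} f_{\alpha\gamma}^{\beta}\,\bigl(Q_\beta^{(j)}Q_\alpha^{(j')} + Q_\alpha^{(j)}Q_\beta^{(j')}\bigr).
\end{align}
In the first summand I would swap the dummy labels $\alpha\leftrightarrow\beta$, so that both summands carry the common operator factor $Q_\alpha^{(j)}Q_\beta^{(j')}$:
\begin{align}
[H^{(j,j')},Q_\gamma^{\tot}] = \sum_{\alpha,\beta} \bigl(f_{\beta\gamma}^{\alpha} + f_{\alpha\gamma}^{\beta}\bigr)\,Q_\alpha^{(j)}Q_\beta^{(j')}.
\end{align}

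Finally, the antisymmetry hypothesis $f_{\alpha\beta}^{\gamma}=-f_{\gamma\beta}^{\alpha}$, applied with the relabeling $(\alpha,\beta,\gamma)\mapsto(\alpha,\gamma,\beta)$, gives exactly $f_{\alpha\gamma}^{\beta}=-f_{\beta\gamma}^{\alpha}$, so the coefficient of every operator term vanishes. This proves $[H^{(j,j')},Q_\gamma^{\tot}]=0$ for every $\gamma$, establishing global conservation. The calculation is largely mechanical; the only subtle point is to invoke the antisymmetry property in the right index pattern, so I would make that relabeling step explicit. I would also remark separately that local nonconservation (i.e., nontrivial transport) follows whenever the $Q_\alpha$ fail to commute among themselves, since then $[H^{(j,j')},Q_\gamma^{(j)}]$ picks up only the single-site piece, which is generally nonzero.
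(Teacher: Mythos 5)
Your proposal is correct and follows essentially the same route as the paper's proof: expand the commutator via the Leibniz rule on the two relevant sites, insert the structure constants, and cancel the two summands using the antisymmetry $f_{\alpha\gamma}^{\beta}=-f_{\beta\gamma}^{\alpha}$ obtained by relabeling the hypothesis. Your explicit dummy-index swap that collects both terms onto the common factor $Q_\alpha^{(j)}Q_\beta^{(j')}$ makes the cancellation slightly more transparent than the paper's phrasing, but it is the identical argument.
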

\noindent 
Every compact semisimple Lie algebra has such structure constants~\cite{metha_83_property}.

\begin{proof}
First, we substitute from Eq.~\eqref{eq_H_Simple_Form}
into the conservation law.
Then, we invoke the commutator's linearity and 
the arguments' tensor-product forms:
\begin{align}
   0 & = \left[  H^{(j, j')},  \,  Q_\alpha^\tot  \right] 
   = \left[  \sum_{\beta = 1}^c  Q_\beta^\JParen  Q_\beta^{(j')},  \,
   Q_\alpha^\JParen  \otimes \id^{(j')}
   +  \id^{(j)}  \otimes  Q_\alpha^{(j')}  \right] \\
   & = \sum_{\beta = 1}^c  \left(
   \left[  Q_\beta^\JParen  Q_\beta^{(j')} ,  \,
            Q_\alpha^{(j)}  \otimes  \id^{(j')}  \right]
   +  \left[  Q_\beta^\JParen  Q_\beta^{(j')},  \, 
                \id^\JParen  \otimes
                Q_\alpha^{(j')}  \right]
   \right) \\
   \label{eq_Struc_Const_Help1}
   & = \sum_{\beta = 1}^c  \left(
   \left[  Q_\beta^\JParen ,  \,  Q_\alpha^\JParen  \right]
   Q_\beta^{(j')}
   +  Q_\beta^\JParen
       \left[  Q_\beta^{(j')} ,  \,  Q_\alpha^{(j')}  \right]
   \right) .
\end{align}
Let $f_{\alpha \beta}^\gamma$ denote the Lie algebra's structure constants.
The $f$'s dictate how a Lie bracket decomposes
as a linear combination of the algebra's elements:
\begin{align}
   [ Q_\alpha,  \,  Q_\beta ]
   & = \sum_{\gamma = 1}^c  
   f_{\alpha \beta}^\gamma
   Q_\gamma .
\end{align}
We substitute into Eq.~\eqref{eq_Struc_Const_Help1}, 
then pull the sums and constants out front:
\begin{align}
   0
   & =  \sum_{\beta = 1}^c  \left[
   \left(  \sum_{\gamma = 1}^c  f_{\beta \alpha}^\gamma  
            Q_\gamma^\JParen  \right)
   Q_\beta^{(j')}
   +  Q_\beta^\JParen
   \left(  \sum_{\gamma = 1}^c
            f_{\beta \alpha}^\gamma
            Q_\gamma^{(j')}  \right)
   \right] 
   =  \sum_{\beta, \gamma = 1}^c
   f_{\beta \alpha}^\gamma
   \left(  Q_\gamma^\JParen  Q_\beta^{(j')}
            +  Q_\beta^\JParen  Q_\gamma^{(j')}  \right) .
\end{align}
The final equation holds if 
$f_{\beta \alpha}^\gamma =  - f_{\gamma \alpha}^\beta$.
Consider relabeling the index $\alpha$ as $\beta$ and vice versa.
Equation~\eqref{eq_Antisym_Struc_Consts} results.
\end{proof}

Having proved that the simple operator~\eqref{eq_H_Simple_Form} conserves noncommuting charges globally, we prove that it transports charges locally.

\begin{proposition}
The simple two-body Hamiltonian~\eqref{eq_H_Simple_Form} 
transports the charges $Q_\alpha$ locally.
\end{proposition}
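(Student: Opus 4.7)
The plan is to establish local transport by showing that the local commutator $[H^{(j,j')}, Q_\alpha^{(j)}]$ fails to vanish for at least one site and one index $\alpha$. This is the contrapositive, in a sense, of the global conservation law just proved: globally the commutator vanishes by antisymmetry of the structure constants, but locally we lose that cancellation, and semisimplicity of $\Algebra$ forces the remainder to be nontrivial.

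First I would expand the commutator exactly as in the proof of Proposition~\ref{prop_H_Simple_Form}, but keeping only the $(j)$-slot term: using the tensor-product form and the linearity of the bracket,
\begin{align}
   \left[  H^{(j, j')},  Q_\alpha^\JParen  \right]
   = \sum_{\beta = 1}^c
      \left[  Q_\beta^\JParen,  Q_\alpha^\JParen  \right]  Q_\beta^{(j')}
   = \sum_{\beta, \gamma = 1}^c  f_{\beta \alpha}^{\gamma}  \,
      Q_\gamma^\JParen  Q_\beta^{(j')} .
\end{align}
This isolates the obstruction to local commutation in a single sum of tensor products weighted by the structure constants.

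Next I would argue linear independence of the operators $\{Q_\gamma^\JParen  Q_\beta^{(j')}\}_{\beta, \gamma}$ on $\Hil^\JParen \otimes \Hil^{(j')}$. By Theorem~\ref{thm_Qs_Form_Basis}, the $Q_\alpha$ form a basis of $\Algebra$ and are in particular linearly independent as operators on $\Hil$; hence the tensor products $Q_\gamma \otimes Q_\beta$ are linearly independent on $\Hil \otimes \Hil$. Consequently the displayed commutator vanishes \emph{only} if $f_{\beta \alpha}^{\gamma} = 0$ for every $\beta, \gamma$, which is equivalent to saying that $Q_\alpha$ is central in $\Algebra$.

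Finally I would invoke semisimplicity: a semisimple Lie algebra has trivial center, so no $Q_\alpha$ in the basis can commute with every other basis element. Therefore at least one index $\alpha$ yields a nonvanishing structure constant $f_{\beta \alpha}^{\gamma}$, and by the linear-independence argument $[H^{(j, j')}, Q_\alpha^\JParen] \neq 0$, establishing local transport in the sense of Eq.~\eqref{eq_Local_Transport}. The main obstacle I anticipate is the linear-independence step: one must be careful that Theorem~\ref{thm_Qs_Form_Basis} yields linear independence as \emph{operators} (not merely as abstract algebra elements), which is why the assumption that the Killing form induces a metric, together with the faithful representation on $\Hil$, is doing real work here; everything else is a one-line consequence of the centerless property of semisimple Lie algebras.
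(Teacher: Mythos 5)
Your proof is correct, and its first half is identical to the paper's: both expand $\comm{H^{(j,j')}}{Q_\alpha^\JParen}$ into $\sum_{\beta,\gamma} f_{\beta\alpha}^{\gamma}\, Q_\gamma^\JParen Q_\beta^{(j')}$ and then argue that this sum cannot vanish. Where you diverge is the endgame, and your version is the more careful one. The paper concludes by asserting that if one basis element commutes with all the others then the algebra is Abelian, and then invokes non-Abelianness; that intermediate assertion is false in general (in $\mathfrak{u}(2)$ with basis $\{\id,\sigma_x,\sigma_y,\sigma_z\}$ the identity commutes with everything, yet the algebra is non-Abelian), and even if granted, non-Abelianness alone would only forbid \emph{every} $Q_\alpha$ from being central, i.e., it is the wrong hypothesis to cite. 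You instead note that vanishing of the sum forces $Q_\alpha$ to be central and then use the triviality of the center of a semisimple Lie algebra---precisely the hypothesis that makes the argument work, and it yields the nonvanishing of the commutator for \emph{every} $\alpha$ (your closing claim of ``at least one index'' undersells what your own argument delivers). You also make explicit the linear-independence step---that $\{Q_\gamma^\JParen Q_\beta^{(j')}\}$ is a linearly independent family of operators because the $Q_\gamma$ are linearly independent as operators on $\Hil$, which Theorem~\ref{thm_Qs_Form_Basis} together with the Killing-metric assumption supplies---whereas the paper leaves this tacit. So: same route, but your final step repairs a genuine imprecision in the published proof.
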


\begin{proof}
Charge $Q_\alpha$ is transported locally if it satisfies Eq.~\shayan{(3)},
having a nonzero commutator
\begin{align}
  \comm{H^{(j,j')}}{Q_{\alpha}^{(j)}}  
  =  \comm{   
  \sum_{\beta = 1}^c   Q_\beta^\JParen  Q_\beta^{(j')}  }{  Q_{\alpha}^{(j)}  } 
  = \sum_{\beta = 1}^c  \left[  
  Q_\beta^\JParen,   Q_\alpha^\JParen  \right]
  Q_\beta^{(j')}  
  \label{eq_Simple_Ham_Help1}
  =  \sum_{\beta, \gamma = 1}^c
  f_{\beta \alpha}^\gamma  \,
  Q_\gamma^\JParen
  Q_\beta^{(j')} .
\end{align}
The final expression vanishes if $Q_\alpha$ commutes with
all the other charges $Q_\gamma$ in the preferred basis.
If a Lie algebra has a basis of which one element commutes with the others,
the algebra is Abelian, by definition~\cite{humphreys_12_introduction}.
We assume that the algebra $\Algebra$ is non-Abelian (Sec.~\shayan{II.A of the main text}).
Therefore, the right-hand side of~\eqref{eq_Simple_Ham_Help1} is nonzero,
and the Hamiltonian transports the charges locally.
\end{proof}

\section{Proof of Proposition 1}
\label{sec_Alg_Props}

Proposition \shayan{1} states that 
the algebra $\Algebra$ has an integer ratio $c/r$,
wherein $c$ denotes the algebra's dimension and $r$ denotes the rank.

\begin{proof}
For every finite-dimensional complex Lie algebra, 
there exists a corresponding connected Lie group that is 
unique to within finite coverings. 
The Lie algebra has the same dimension and rank as
each of the corresponding Lie groups. 
Thus, if Proposition \shayan{1} holds 
for all semisimple Lie groups, 
it holds for all semisimple Lie algebras.
We prove the group claim.

Every Lie group has a maximal torus $\mathbb{T}^{r}$,
which is the group generated by a Cartan subalgebra of the Lie algebra.
The torus' dimensionality equals the group's rank, $r$.
A torus is an $r$-fold Cartesian product of $\mathbb{S}^{1}$ manifolds 
[equivalently, of the group U(1)]. 
Quotienting out the torus' action from the Lie group yields
a finite-dimensional coset space.
Every finite-dimensional coset space's dimensionality is a positive integer
$n \in \mathbb{Z}_{>0}$.
Thus, the semisimple Lie group's dimension is $c = rn$.
\end{proof}

\section{Mathematical details: Construction of a two-body Hamiltonian that 
transports \texorpdfstring{$\mathfrak{su}(3)$}{su(3)} elements locally  
while conserving them globally} 
\label{app_math_details_su3}

Section~\shayan{II.D} illustrated the Hamiltonian-construction prescription with $\su(3)$.
We flesh out the explanation here. 
Appendix~\ref{app_su3_Cartan_Weyl} reviews 
the conventional Cartan-Weyl basis for $\su(3)$.
Appendix~\ref{app_sub_findq3q4} identifies 
the preferred basis of charges for $\su(3)$.
Appendix~\ref{app_su3_ladder_ops} presents the ladder operators
from which we construct a Hamiltonian.

\subsection{Conventional Cartan-Weyl basis for \texorpdfstring{$\su(3)$}{su(3)} 
\label{app_su3_Cartan_Weyl}}

$\mathfrak{su}$(3) has dimension $\DimAlg = 8$ and rank $r = 2$.
The conventional Cartan-subalgebra generators
are denoted by $t_z = \lambda_3 / 2$ and $y = \lambda_8 / \sqrt{3}$,
wherein $\lambda_3$ and $\lambda_8$ denote Gell-mann matrices~\cite{Cahn_06_Semi}.
These generators, in the three-dimensional representation 
of $\mathfrak{su}$(3), manifest as
\begin{align}
   T_z  =  \frac{1}{2}  \begin{bmatrix}
   1 & 0 & 0 \\
   0 & -1 & 0 \\
   0 & 0 & 0
   \end{bmatrix}
   \quad \text{and} \quad
   Y  =  \frac{1}{3}  \begin{bmatrix}
   1 & 0 & 0 \\
   0 & 1 & 0 \\
   0 & 0 & -2
   \end{bmatrix}  .
\end{align}
$t_z$ and $y$ are orthogonal relative to the Killing form.
They (more precisely, rescaled versions of them) belong in our preferred basis of charges:
$Q_1 \propto t_z$, and $Q_2 \propto y$.

These charges are raised and lowered by
$\DimAlg - r = 8 - 2 = 6$ ladder operators,
$t_\pm = (\lambda_1 \pm i \lambda_2) / 2$,
$v_\pm = (\lambda_4 \pm i \lambda_5 ) / 2$, 
and $u_\pm = (\lambda_6 \pm i \lambda_7 ) / 2$.
In the three-dimensional representation of $\mathfrak{su}$(3),
the ladder operators manifest as
\begin{align}
   & T_+  =  \frac{1}{2}
   \begin{bmatrix}
      0 & 1 & 0 \\
      0 & 0 & 0 \\
      0 & 0 & 0
   \end{bmatrix} ,  \quad
   T_-  =  \frac{1}{2}
   \begin{bmatrix}
      0 & 0 & 0 \\
      1 & 0 & 0 \\
      0 & 0 & 0
   \end{bmatrix} ,  \quad
   V_+  =  \frac{1}{2}
   \begin{bmatrix}
      0 & 0 & 1 \\
      0 & 0 & 0 \\
      0 & 0 & 0
   \end{bmatrix} ,  \quad
   V_-  =  \frac{1}{2}
   \begin{bmatrix}
      0 & 0 & 0 \\
      0 & 0 & 0 \\
      1 & 0 & 0
   \end{bmatrix} ,     \\ &
   U_+  =  \frac{1}{2}
   \begin{bmatrix}
      0 & 0 & 0 \\
      0 & 0 & 1 \\
      0 & 0 & 0
   \end{bmatrix} , 
   \quad \text{and} \quad
   U_-  =  \frac{1}{2}
   \begin{bmatrix}
      0 & 0 & 0 \\
      0 & 0 & 0 \\
      0 & 1 & 0
   \end{bmatrix} .
\end{align}

The ladder operators participate in the following commutation relations with the charges:
\begin{align}
   \label{eq_Comm_Rels1}
   & [ t_z,  t_\pm ]  =  \pm t_\pm ,  \quad
   [y, t_\pm ]  =  0 ,   \\ 
   & [t_z,  v_\pm]  =  \pm \frac{1}{2} v_\pm ,  \quad
   [y, v_\pm]  =  \pm v_\pm \\
   \label{eq_Comm_Rels3}
   & [t_z, u_\pm]  =  \mp  \frac{1}{2} u_\pm ,  
   \quad \text{and} \quad
   [y, u_\pm]  =  \pm u_\pm .
\end{align}
These relations imply that 
(i) $t_\pm$ raises and lowers $t_z$, whereas
(ii) $v_\pm$ raises or lowers both $t_z$ and $y$,
as does $u_\pm$.
We can prove this physical significance easily:
Let $L_\pm$ denote a ladder operator (a $t_\pm$, a $v_\pm$, or a $u_\pm$)
that raises/lowers a charge $Q$.
Let $\ket{ \psi }$ denote a $Q$ eigenstate
associated with the eigenvalue $q$:
$Q \ket{\psi}  =  q \ket{\psi}$.
Consider operating on the state with the ladder operator:
$L_\pm  \ket{\psi}$.
Suppose, for notational convenience, that, (i) if $L_+$ operates,
$q$ is not the greatest $Q$ eigenvalue and
(ii) if $L_-$ operates,
$q$ is not the least $Q$ eigenvalue.
The resulting state is a $Q$ eigenstate
associated with the eigenvalue $q \pm a$,
wherein $a = 1$ or $1/2$.
To prove this claim, we operate on the new state with the charge:
$Q ( L_\pm \ket{\psi} )$.
Invoking the appropriate commutation relation 
[Eqs.~\eqref{eq_Comm_Rels1}-\eqref{eq_Comm_Rels3}] yields
\begin{align}
   \label{eq_Comm_Rels4}
   Q L_\pm \ket{\psi}
   &=  (L_\pm  Q  \pm  L_\pm )  \ket{\psi}
   =  L_\pm  (Q  \pm  a \id)  \ket{\psi}
   =  L_\pm  (q  \pm  a)  \ket{\psi}
   =  (q  \pm  a)  L_\pm  \ket{\psi}.
\end{align}
By Eqs.~\eqref{eq_Comm_Rels1}-\eqref{eq_Comm_Rels4},
$t_\pm$ raises/lowers the $t_z$ charge by one quantum
and preserves $y$.
$u_\pm$ lowers/raises $t_z$ by half a quantum
and raises/lowers $y$ by one quantum.
$v_\pm$ raises/lowers each of $t_z$ and $y$
by one quantum.

Having reviewed the conventional Cartan-Weyl basis for $\su(3)$,
we dispense with the conventional notation ($t_z$, $t_\pm$, etc.).
We revert to the notation introduced in the main text
($Q_\alpha$ and $L_{\pm \alpha}$).

\subsection{Preferred basis of charges for \texorpdfstring{$\su(3)$}{su(3)}
\label{app_sub_findq3q4}}

The first two charges appear in Eqs.~\shayan{(20)}.
We construct two new charges from $Q_1$, $Q_2$, 
and a unitary $U \in$ SU(2).
The general form of such a $U$, appears, 
in the Euler parameterization, in Eq.~\shayan{(21)}.
We constrain $U$ with the Killing-orthogonality conditions~\shayan{(9)},
obtaining a unitary $U_{{\rm i}}$.
The transformed charges have the forms
$Q_3 = U_{{\rm i}}^\dag Q_1 U_{{\rm i}}$ and $Q_4 = U_{{\rm i}}^\dag Q_2 U_{{\rm i}}$.
The new charges are Killing-orthogonal to each other by unitarity:
$0 = \Tr  \left(  \left[  U_{{\rm i}}^{\dagger}Q_1U_{{\rm i}}  \right]  
                         \left[  U_{{\rm i}}^{\dagger}Q_2U_{{\rm i}}  \right] \right)
    = \Tr  \left(  Q_1Q_2  \right) = 0$.
Killing-orthogonality to the old charges,
Eq.~\shayan{(20)},
with the form of the $\su(\SUD)$ Killing form~\cite{humphreys_12_introduction}, implies
\begin{align}
     0 & = \Tr  \left(  \left[  U_{{\rm i}}^{\dagger}Q_1U_{{\rm i}}  \right]  Q_2  \right)
     = - \cos(\phi_2) / 3 ,    
     \qquad \qquad \qquad \quad \; \; \:
     0 = \Tr  \left(  \left[  U_{{\rm i}}^{\dagger}Q_1U_{{\rm i}}  \right]  Q_1  \right) 
    = - \frac{1}{2\sqrt{3}}  \cos(\phi_3 + \phi_5) , \\
     0 & = \Tr  \left(  \left[  U_{{\rm i}}^{\dagger}Q_2 U_{{\rm i}}  \right]  Q_2  \right) 
     = \frac{1}{2} \left[  \cos(\phi_4) + \frac{1}{3}  \right]  ,
     \qquad \text{and} \qquad
    0 = \Tr  \left(  \left[  U_{{\rm i}}^{\dagger}Q_2U_{{\rm i}}  \right]  Q_1  \right) 
    = - \cos(\phi_6) / 3.  
\end{align}
Since $\phi_2 , \phi_4, \phi_6 \in [0, \pi]$ and 
$\phi_3, \phi_5 \in [0, 2\pi)$,
$\phi_2 = \frac{\pi}{2}$, 
$\phi_4 = \acos( - 1 / 3)$, 
$\phi_6 = \frac{\pi}{2}$ and 
$\phi_5 = \pi( n - 1/2 ) - \phi_3$, for 
$n \in \{1,2,3,4\}$.

Transforming $Q_1$ and $Q_2$ with a $U_{{\rm ii}} \in$ SU(3)
yields the charges $Q_5$ and $Q_6$,
and transforming $Q_1$ and $Q_2$ with a $U_{{\rm iii}} \in$ SU(3)
yields $Q_7$ and $Q_8$.
These last four charges are Killing-orthogonal to $Q_1$ and $Q_2$, 
like $Q_3$ and $Q_4$.
So $U_{{\rm ii}}$ and $U_{{\rm iii}}$ share the form of $U_{{\rm i}}$.
However, parameters $a^{({\rm ii})}$ and $b^{({\rm ii})}$, or $a^{({\rm iii})}$ and $b^{({\rm iii})}$,
replace the $a^{({\rm i})}$ and $b^{({\rm i})}$.
The later unitaries' parameters are more constrained than 
the $U_{{\rm i}}$ parameters.
Similarly, $Q_5$ through $Q_8$ share the forms of $Q_3$ and $Q_1$,
apart from their more-constrained parameters.

Evaluating the restrictions on all the charges simultaneously will prove useful. 
First, the conditions for $Q_5$ to be orthogonal to $Q_3$ and $Q_4$ are
\begin{align}
    \label{eq_Q5_Constraint_Q3}
    0  & =  \Tr (Q_5Q_3)
    \propto (-1)^{n^{({\rm i})} + n^{({\rm ii})}}  
    \cos(a^{({\rm i})} - a^{({\rm ii})} - b^{({\rm i})} + b^{({\rm ii})}) 
    + \cos(a^{({\rm i})} - a^{({\rm ii})})  + \cos(b^{({\rm i})} - b^{({\rm ii})})
    \quad \text{and} \\
    0  & =  \Tr (Q_5 Q_4) 
    \propto (-1)^{n^{({\rm i})}
    + n^{({\rm ii})}}\sin(a^{({\rm i})} - a^{({\rm ii})} - b^{({\rm i})} + b^{({\rm ii})}) 
    - \sin(a^{({\rm i})} - a^{({\rm ii})}) + \sin(b^{({\rm i})} - b^{({\rm ii})}).
\end{align}
The orthogonality conditions for $Q_6$ impose the same constraints, since 
$\Tr(Q_6Q_3)  \propto \Tr(Q_5Q_4)$ and $  \Tr (Q_6Q_4)  \propto \Tr(Q_5Q_3)$
(as can be checked explicitly).
Similarly, the orthogonality conditions on $Q_7$ evaluate to
\begin{align}
    \label{eq_Q7_Constraint_Q3}
    0  & =  \Tr ( Q_7 Q_3 )
    \propto (-1)^{n^{({\rm i})}+n^{({\rm iii})}}\cos(a^{({\rm i})} - a^{({\rm iii})} - b^{({\rm i})} + b^{({\rm iii})}) 
    + \cos(a^{({\rm i})} - a^{({\rm iii})}) + \cos(b^{({\rm i})} - b^{({\rm iii})}) ,  \\
    0  & =  \Tr (Q_7 Q_4)
    \propto (-1)^{n^{({\rm i})}+n^{({\rm iii})}}\sin(a^{({\rm i})} - a^{({\rm iii})} - b^{({\rm i})} + b^{({\rm iii})})
    - \sin(a^{({\rm i})} - a^{({\rm iii})}) + \sin(b^{({\rm i})} - b^{({\rm iii})}) ,  \\
    0  & =  \Tr (Q_7  Q_5)
    \propto  (-1)^{n^{({\rm ii})}+n^{({\rm iii})}}\cos(a^{({\rm ii})} - a^{({\rm iii})} - b^{({\rm ii})} + b^{({\rm iii})}) 
    + \cos(a^{({\rm ii})} - a^{({\rm iii})}) + \cos(b^{({\rm ii})} - b^{({\rm iii})}) ,
    \; \text{and} \\
    \label{eq_Q7_Constraint_Q6}
    0  & =  \Tr (Q_7  Q_6)
    \propto (-1)^{n^{({\rm ii})}+n^{({\rm iii})}}\sin(a^{({\rm ii})} - a^{({\rm iii})} - b^{({\rm ii})} + b^{({\rm iii})})
    - \sin(a^{({\rm ii})} - a^{({\rm iii})}) + \sin(b^{({\rm ii})} - b^{({\rm iii})}).
\end{align}
The orthogonality conditions for $Q_8$ impose the same constraints 
[Eqs.~\eqref{eq_Q7_Constraint_Q3}-\eqref{eq_Q7_Constraint_Q6}].

We now identify sets of $a^{(\ell)}, b^{(\ell)},$ and $n^{(\ell)}$ 
that are solutions for all six constraints, 
Eqs.~\eqref{eq_Q5_Constraint_Q3}-\eqref{eq_Q7_Constraint_Q6}.
First, we define
$x_{\ell m} := a^{(\ell)} - a^{(m)}$ and 
$ y_{\ell m} := b^{(\ell)} - b^{(m)}$, for 
$(\ell, m) = (2,3), (2,4), (3,4)$. 
By these definitions, $x_{24} = x_{23} + x_{34}$, 
and $y_{24} = y_{23} + y_{34}$.
Second, the values of the $n^{(\ell)}$ themselves are irrelevant.
Only whether $n^{(\ell)} + n^{(m)}$ is even or odd matters.
Only four unique possibilities for the $n^{(\ell)}$ exist: 
All the $n^{(\ell)} + n^{(m)}$ are even; 
or one $n^{(\ell)} + n^{(m)}$ is even, 
while the other two sums are odd. 
A solution can therefore be expressed in terms of just four quantities:
$x_{23}, x_{34}, y_{23},$ and $y_{34}$. 
Each solution is periodic:
\begin{align}
   \label{eq_Period}
   ( x_{23}, x_{34}, y_{23}, y_{34} )
   \equiv ( x_{23}, x_{34}, y_{23}, y_{34} )
   + ( 2 \pi n, 2 \pi n, 2 \pi n, 2 \pi n ) ,
\end{align}
wherein $n \in \mathbbm{Z}$. Therefore, we omit the $2\pi n$ when listing the solutions below.

First, suppose that all the $n^{(\ell)} + n^{(m)}$ are even. 
The constraints~\eqref{eq_Q5_Constraint_Q3}-\eqref{eq_Q7_Constraint_Q6} 
admit of 18 solutions. The first ten are
\begin{align}
    \left( x_{23},x_{34},y_{23},y_{34} \right) & =
    \left( 0, \pm \frac{2\pi}{3},  \mp \frac{2\pi}{3}, \pm \frac{2\pi}{3}  \right), \,
    \left( 0, 0,  \pm \frac{2\pi}{3}, \pm \frac{2\pi}{3} \right), \,
    \left( 0, \pm \frac{2\pi}{3}, \pm \frac{2\pi}{3}, 0 \right),  \,
    \left( \pm \frac{2\pi}{3}, 0, \pm \frac{2\pi}{3}, \mp \frac{2\pi}{3} \right), 
    \nonumber \\ & \quad \; 
    \left( \pm \frac{2\pi}{3}, \pm \frac{2\pi}{3}, \pm \frac{2\pi}{3}, \pm \frac{2\pi}{3} \right).
\end{align}
The next eight solutions are identical to the first eight, except that each $x_{\ell m}$ is swapped with the corresponding $y_{\ell m}$.

Second, $n^{({\rm i})} + n^{({\rm iii})}$ can be even while  $
n^{({\rm i})} + n^{({\rm ii})}$ and  $n^{({\rm ii})} + n^{({\rm iii})}$ are odd. 
The constraints~\eqref{eq_Q5_Constraint_Q3}-\eqref{eq_Q7_Constraint_Q6} admit of another 18 solutions. The first ten are 
\begin{align}
    \left( x_{23},x_{34},y_{23},y_{34} \right) &=
    \left( \pi, \pm \frac{\pi}{3},  \mp \frac{\pi}{3}, \pm \frac{\pi}{3} \right), \,
    \left( \pi, \pi,  \pm \frac{\pi}{3}, \pm \frac{\pi}{3} \right), \,
    \left( \pi, \pm \frac{\pi}{3},  \pm \frac{\pi}{3}, \pi \right),  \,
    \left( \pm \frac{\pi}{3}, \pi,  \pm \frac{\pi}{3}, \mp \frac{\pi}{3} \right), 
    \nonumber \\ & \quad \; 
    \left( \pm \frac{\pi}{3},  \pm \frac{\pi}{3}, \pm \frac{\pi}{3},\pm \frac{\pi}{3} \right).
\end{align}
The next eight solutions are identical to the first eight,
except that each $x_{\ell m}$ is swapped with the corresponding $y_{\ell m}$.

Third, $n^{({\rm i})} + n^{({\rm ii})}$ can be even while  
$n^{({\rm i})} + n^{({\rm iii})}$ and  $n^{({\rm ii})} + n^{({\rm iii})}$ are odd. 
The constraints~\eqref{eq_Q5_Constraint_Q3}-\eqref{eq_Q7_Constraint_Q6} admit of another 18 solutions. The first ten are
\begin{align}
    \left( x_{23},x_{34},y_{23},y_{34} \right) &=
    \left( 0, \pm \frac{\pi}{3},  \pm \frac{2\pi}{3}, \pm \frac{\pi}{3} \right), \,
    \left( 0, \pi,  \pm \frac{2\pi}{3}, \mp \frac{\pi}{3} \right), \,
    \left( 0, \mp \frac{\pi}{3}, \pm \frac{2\pi}{3}, \pi \right),  \,
    \left( \pm \frac{2\pi}{3}, \pi, \pm \frac{2\pi}{3}, \pm \frac{\pi}{3} \right), 
    \nonumber \\ & \quad \; 
    \left( \pm \frac{2\pi}{3}, \mp \frac{\pi}{3}, \pm \frac{2\pi}{3}, \mp \frac{\pi}{3} \right).
\end{align}
The next eight solutions are identical to the first eight, 
except that each $x_{\ell m}$ is swapped with the corresponding $y_{\ell m}$.

Fourth, suppose that $n^{({\rm ii})} + n^{({\rm iii})}$ is even while  
$n^{({\rm i})} + n^{({\rm ii})}$ and  $n^{({\rm i})} + n^{({\rm iii})}$ are odd.
The constraints~\eqref{eq_Q5_Constraint_Q3}-\eqref{eq_Q7_Constraint_Q6} admit of another 18 solutions. The first ten are
\begin{align}
    \left( x_{23},x_{34},y_{23},y_{34} \right) &=
    \left( \pi, \pm \frac{2\pi}{3},  \pm \frac{\pi}{3}, \pm \frac{2\pi}{3} \right), \,
    \left( \pi, 0,  \pm \frac{\pi}{3}, \mp \frac{2\pi}{3} \right), \,
    \left( \pi, \mp \frac{2\pi}{3}, \pm \frac{\pi}{3}, 0 \right),  \,
    \left( \pm \frac{\pi}{3}, 0, \pm \frac{\pi}{3}, \pm \frac{2\pi}{3} \right), 
    \nonumber \\ & \quad \; 
    \left( \pm \frac{\pi}{3}, \mp \frac{2\pi}{3}, \pm \frac{\pi}{3}, \mp \frac{2\pi}{3} \right).
\end{align}
The next eight solutions are identical to the first eight, except that each $x_{\ell m}$ is swapped with the corresponding $y_{\ell m}$.

One can check explicitly that the tuple 
$(x_{23} + y_{23}, \, x_{34} + y_{34})$ has three possible values:
$(x_{23} + y_{23}, \, x_{34} + y_{34}) = (\pm 2\pi/3, \, \pm 2\pi/3),
(\pm 4\pi/3, \, \pm 4\pi/3), (\pm 2\pi/3, \, \mp 4\pi/3)$. 
Three sets of solutions follow. For example, the first set of solutions is
$(x_{23} + y_{23}, \, x_{34} + y_{34}) = (\pm 2\pi/3, \, \pm 2\pi/3).$
Hence
\begin{align}
    & a^{({\rm i})} - a^{({\rm ii})} + b^{({\rm i})} - b^{({\rm ii})} 
    = \pm \frac{2\pi}{3} \, ,  \qquad \quad \; \;
    a^{({\rm ii})} - a^{({\rm iii})} + b^{({\rm ii})} - b^{({\rm iii})} 
    = \pm \frac{2\pi}{3} \, , \\
    & a^{(\ell)} - a^{(m)} \in \left\{  0, \frac{\pm \pi}{3},\frac{\pm 2\pi}{3}, \pi  \right\} \, ,
    \quad \text{and} \quad
    b^{(\ell)} - b^{(m)} \in \left\{ 0, \frac{\pm \pi}{3}, \frac{\pm 2\pi}{3}, \pi  \right\} ,
\end{align}
for $(\ell,m) = (2,3)$ and $(3,4)$.
All the solutions lead to the same Hamiltonian, Eq.~\shayan{(25)}.

%
%
%
\subsection{Ladder operators for \texorpdfstring{$\mathfrak{su}(3)$}{su(3)} \label{app_su3_ladder_ops}}

The conventional Cartan-Weyl basis contains six ladder operators
[Eqs.~\shayan{(24)}].
We transform $L_{\pm 1,2,3}$ with the unitaries $U_{\rm i}$, $U_{\rm ii}$, and $U_{\rm iii}$ of \shayan{Sec.~\ref{app_sub_findq3q4}},
to construct the rest of the ladder operators:
$L_{\pm 4} = U_{{\rm i}}^{\dagger} L_{\pm1} U_{{\rm i}}$, 
$ L_{\pm 5} = U_{{\rm i}}^{\dagger} L_{\pm2} U_{{\rm i}}$, and 
$ L_{\pm 6} = U_{{\rm i}}^{\dagger} L_{\pm3} U_{{\rm i}}$.
Substituting in for $L_{\pm 1, 2, 3}$ from Eq.~\shayan{(24)} yields
\begin{align}
   L_{\pm 4} 
   & =  \frac{ie^{\mp i\phi_1^{({\rm i})}}}{6}
   \Big\{ 2i\cos( a^{({\rm i})} - b^{({\rm i})} )  \lambda_1 
            - 2i\sin(a^{({\rm i})} - b^{({\rm i})})\lambda_2  
            \mp \left[  \sqrt{3} \mp i(-1)^{n^{({\rm i})}}  \right]
                   \left[\cos( a^{({\rm i})}) \lambda _4 - \sin(a^{({\rm i})}) \lambda _5 \right]  
   \nonumber \\ & \qquad \qquad \quad \;
   \pm \left[  \sqrt{3} \pm i(-1)^{n^{({\rm i})}}  \right]  
          \left[  \cos(b^{({\rm i})}) \lambda _6 -  \sin(b^{({\rm i})}) \lambda _7  \right] 
   \mp \sqrt{3} (-1)^{n^{({\rm i})}} \lambda_3 
   - \sqrt{3}i\lambda_8  \Big\} ,  \\
   L_{\pm 5} & = \frac{  ie^{\mp\frac{i}{2}  (  \phi_3^{({\rm i})} + \phi_1^{({\rm i})}  )  }}{6}
   \Big( i  \left[  \cos(a^{({\rm i})}- b^{({\rm i})}) 
            - \sqrt{3}(-1)^{n^{({\rm i})}}  \sin(a^{({\rm i})} - b^{({\rm i})})  \right]  \lambda_1  
   \nonumber \\ & \qquad
    -i  \left[  \sin(a^{({\rm i})} - b^{({\rm i})}) 
       + \sqrt{3}(-1)^{n^{({\rm i})}}  \cos(a^{({\rm i})} - b^{({\rm i})})  \right]  \lambda_2 
    \pm \frac{1}{2}  \left\{  (-1)^{n^{({\rm i})}}  
    \left[  3  \sin ( a^{({\rm i})} )   \pm i \cos (a^{({\rm i})} ) \right]
                                     + \sqrt{3}e^{\pm ia^{({\rm i})}}  \right\}   \lambda_4 
    \nonumber \\ & \qquad
    \pm \frac{1}{2}  \left\{  (-1)^{n^{({\rm i})}}  
    \left[ 3  \cos  (a^{({\rm i})} ) \mp i \sin ( a^{({\rm i})} )  \right] 
    \pm i\sqrt{3}e^{\pm ia^{({\rm i})}}  \right\}
    \lambda_5 
    \nonumber \\ & \qquad
    \pm \frac{1}{2}  \left\{  (-1)^{n^{({\rm i})}}
    \left[  3\sin ( b^{({\rm i})} ) \pm  i \cos ( b^{({\rm i})} )  \right] 
    - \sqrt{3}e^{\pm ib^{({\rm i})}}  \right\}
    \lambda_6 
    \nonumber \\ & \qquad
    \pm \frac{1}{2}  \left\{  (-1)^{n^{({\rm i})}}
    \left[  3\cos (b^{({\rm i})}) \mp i\sin (b^{({\rm i})})  \right] 
    \mp i\sqrt{3}  e^{\pm ib^{({\rm i})}}  \right\}
    \lambda_7  
    \mp \sqrt{3}(-1)^{n^{({\rm i})}}\lambda_3
    + \sqrt{3}i\lambda_8 
    \Big) , \quad \text{and}  \\
    L_{\pm 6} 
    & = \frac{ie^{\mp\frac{i}{2}  
    \left(  \phi_3^{({\rm i})} - \phi_1^{({\rm i})}  \right)}}{6}
    \Big(  -i  \left[  \cos  \left(  a^{({\rm i})}- b^{({\rm i})}  \right) 
   + \sqrt{3}(-1)^{n^{({\rm i})}}  \sin(a^{({\rm i})} - b^{({\rm i})})  \right]  \lambda_1  
   \nonumber \\ & \qquad
    + i  \left[  \sin(a^{({\rm i})} - b^{({\rm i})}) 
    - \sqrt{3}(-1)^{n^{({\rm i})}}\cos(a^{({\rm i})} - b^{({\rm i})})  \right]\lambda_2 
    \mp \frac{1}{2}  \left\{  (-1)^{n^{({\rm i})}}
    \left[  3\sin (a^{({\rm i})}) \pm i  \cos (a^{({\rm i})})  \right] 
    - \sqrt{3}e^{ia^{({\rm i})}}  \right\}
    \lambda_4 
    \nonumber \\ & \qquad
    \mp \frac{1}{2}  \left\{  (-1)^{n^{({\rm i})}}
    \left[  3\cos (a^{({\rm i})} ) \mp i\sin (a^{({\rm i})} )  \right]
    \mp i\sqrt{3}e^{ia^{({\rm i})}}  \right\}
    \lambda_5 
    \mp \frac{1}{2}  \left\{ (-1)^{n^{({\rm i})}}  
    \left[ 3\sin ( b^{({\rm i})} ) \pm i \cos ( b^{({\rm i})} ) \right] 
    + \sqrt{3}e^{ib^{({\rm i})}} \right\}
    \lambda_6 
    \nonumber \\ & \qquad
    \mp \frac{1}{2}  \left\{  (-1)^{n^{({\rm i})}}
    \left[ 3\cos (b^{({\rm i})} ) \mp i\sin (b^{({\rm i})} ) \right]
    \pm i\sqrt{3}e^{ib^{({\rm i})}}  \right\}
    \lambda_7  
    \mp \sqrt{3}(-1)^{n^{({\rm i})}}\lambda_3
    - \sqrt{3}i\lambda_8 
    \Big) .
 \end{align}
$L_{\pm 7}$, $L_{\pm 8}$, and $L_{\pm 9}$ have the same forms.
However, $({\rm ii})$'s replace the superscripts $({\rm i})$'s. 
$L_{\pm 10}$, $L_{\pm 11}$, and $L_{\pm 12}$ likewise have the same form,
except that $({\rm iii})$'s replace the $({\rm i})$'s.

\end{appendices}

\bibliographystyle{naturemag}
\bibliography{apssamp}
\end{document}